\documentclass[conference]{IEEEtran}
\IEEEoverridecommandlockouts

\usepackage{algorithm}
\usepackage{algpseudocode}
\usepackage{caption, subcaption, enumerate,varwidth}
\usepackage{wrapfig, epsfig, graphicx}
\usepackage{subcaption, array}
\usepackage{multirow}
\usepackage{booktabs, tabularx}
\usepackage[noadjust]{cite}
\usepackage{amsmath,amssymb,amsfonts,amsbsy,amsthm}
\usepackage{textcomp,color}
\definecolor{darkblue}{rgb}{0.0, 0.0, 0.55}
\usepackage[colorlinks=true,backref=page,urlcolor=darkblue,citecolor=blue,anchorcolor=blue,linkcolor=blue,draft]{hyperref}
\usepackage{mathtools}
\DeclarePairedDelimiter{\ceil}{\lceil}{\rceil}

\renewcommand*{\backref}[1]{}
\renewcommand*{\backrefalt}[4]{%
	\ifcase #1 (Not cited.)%
	\or        (page~#2)%
	\else      (pages~#2)%
	\fi}

\graphicspath{{figures/}{./}}

\newcommand{\ra}[1]{\renewcommand{\arraystretch}{#1}}
\setlength\heavyrulewidth{1.3pt}
\setlength\lightrulewidth{0.8pt}
\setlength{\fboxrule}{1.3pt}

\newtheorem{theorem}{Theorem}

\renewenvironment{proof}{{\flushleft \textbf{Proof.}}}{\hfill $\blacksquare$}
\renewcommand\footnotemark{}

\def\BibTeX{{\rm B\kern-.05em{\sc i\kern-.025em b}\kern-.08em
    T\kern-.1667em\lower.7ex\hbox{E}\kern-.125emX}}
\begin{document}

\title{A Parallel Algorithm for Generating a Random Graph with a Prescribed Degree Sequence\\
}

\author{\IEEEauthorblockN{Hasanuzzaman Bhuiyan}
\IEEEauthorblockA{ \textit{Department of Computer Science}\\
\textit{Network Dynamics and Simulation }\\
\textit{Science Laboratory (NDSSL)}\\
\textit{Biocomplexity Institute of Virginia Tech}\\
Blacksburg, VA, USA \\
mhb@vt.edu}
\and
\IEEEauthorblockN{Maleq Khan}
\IEEEauthorblockA{\textit{Department of Electrical Engineering} \\
\textit{and Computer Science}\\
\textit{Texas A\&M University--Kingsville}\\
Kingsville, TX, USA \\
maleq.khan@tamuk.edu}
\and
\IEEEauthorblockN{Madhav Marathe}
\IEEEauthorblockA{ \textit{Department of Computer Science}\\
\textit{Network Dynamics and Simulation }\\
\textit{Science Laboratory (NDSSL)}\\
\textit{Biocomplexity Institute of Virginia Tech}\\
Blacksburg, VA, USA \\
mmarathe@vt.edu}
}

\maketitle
\begin{abstract}
Random graphs (or networks) have gained a significant increase of interest 
due to its popularity in modeling and simulating many complex
real-world systems.  Degree sequence is one of the most important aspects
of these systems.  Random graphs with a given degree sequence can capture
many characteristics like dependent edges and non-binomial degree distribution
that are absent in many classical random graph models such as the Erd\H{o}s-R\'{e}nyi 
graph model.  In addition, they have important applications in uniform sampling of
random graphs, counting the number of graphs having the same degree sequence,
as well as in string theory, random matrix theory, and matching theory.  
In this paper, we present an OpenMP-based shared-memory
parallel algorithm for generating a random graph with a prescribed degree
sequence, which achieves a speedup of 20.5 with 32 cores.
We also present a comparative study of several structural properties of
the random graphs generated by our algorithm with that of the real-world
graphs and random graphs generated by other popular methods.
One of the steps in our parallel algorithm requires checking the
Erd\H{o}s-Gallai characterization, i.e., whether there exists a graph
obeying the given degree sequence, in parallel.  This paper presents a 
non-trivial parallel algorithm for checking the Erd\H{o}s-Gallai
characterization, which achieves a speedup of 23 with 32 cores.
\end{abstract}

\begin{IEEEkeywords}
graph theory, random graph generation, degree sequence, 
Erd\H{o}s-Gallai characterization, parallel algorithms
\end{IEEEkeywords}

\section{Introduction}
\label{sec-introduction}
Random graphs are widely used for modeling many complex 
real-world systems such as
the Internet~\cite{siganos2003power}, biological~\cite{girvan2002community}, 
social~\cite{yang2015defining}, and infrastructure~\cite{latora2005vulnerability}
networks to understand how the systems work through
obtaining rigorous mathematical and simulation results.  Many random graph
models such as the Erd\H{o}s-R\'{e}nyi~\cite{erdosRenyi}, the Preferential 
Attachment~\cite{prefAttach}, the small-world~\cite{smallWorld}, and the
Chung-Lu~\cite{chung2002average} models have been proposed to capture various 
characteristics of real-world systems.  Degree sequence is one of the most
important aspects of these systems and has been extensively studied in graph
\mbox{theory~\cite{berge1973graphs,lovasz2009matching,hakimi1978graphs}.}  
It has significant applications in a wide range of areas including 
structural reliability and communication networks because of the strong
ties between the degrees of vertices and the structural properties
of and dynamics over a network~\cite{boccaletti2006complex}.

Random graphs with given degree sequences are widely used in uniform sampling
of random graphs as well as in counting the number of graphs having the same degree
sequence~\cite{cloteaux2016fast,blitzstein2011sequential,bayati2010sequential,del2010efficient}.
For example, in an epidemiology study of sexually transmitted diseases~\cite{liljeros2001web},
anonymous surveys collect data about the number of sexual partners of an individual
within a given period of time, and then the problem reduces to generating a network
obeying the degree sequence collected from the survey, and studying the disease
dynamics over the network.  Other examples include determining the total number
of structural isomers of chemical compounds such as alkanes, where the valence of
an atom is the degree.  Moreover, the random graphs with given 
degree sequences can capture many characteristics such as dependent edges
and non-binomial degree distribution that are absent in many classical models such as the
Erd\H{o}s-R\'{e}nyi~\cite{erdosRenyi} graph model.  They also have important
applications in string theory, random matrix theory, and matching 
theory~\cite{lovasz2009matching}.

The problem of generating a random graph with a given degree sequence becomes
considerably easier if self-loops and parallel edges are allowed.  Throughout
this paper, we consider simple graphs with no self-loops or parallel edges.
Most prior work on generating random graphs involves sequential algorithms,
and they can be broadly categorized in two classes: 
(\textbf{\textit{i}})~edge swapping and (\textbf{\textit{ii}})~stub-matching.
Edge \mbox{swapping~\cite{Feder,Gkantsidis,bhuiyan2014fast}}
uses the Markov chain Monte
Carlo~(MCMC) scheme on a given graph having the degree sequence.  An edge swap
operation replaces two edges $e_1=(a,b)$ and $e_2=(c,d)$, selected uniformly at
random from the graph, by new edges $e_3=(a,d)$ and $e_4=(c,b)$, 
i.e., the end vertices of the selected edges are swapped with each
other.  This operation is repeated either a given number of times or until a
specified criterion is satisfied.  It is easy to see that the degree of each
vertex remains invariant under an edge swap process.  Unfortunately, very
little theoretical results have been rigorously shown about the mixing
time~\cite{Feder,Cooper} of the edge swap process and they are ill-controlled.  
Moreover, most of the results are heuristic-based.

On the other hand, among the swap-free stub-matching methods, the \textit{configuration}
or \textit{pairing} method~\cite{Ne03} is very popular and uses a direct graph
construction method.  For each vertex, it creates as many stubs or ``dangling half-edges"
as of its degree.  Then edges are created by choosing pairs of vertices randomly and
connecting them.  This approach creates parallel edges, which are dealt with by restarting
the process.  Unfortunately, the probability of restarting the process
approaches~$1$ for larger degree sequences.  Many 
\mbox{variants~\cite{wormald99,steger99,kim04}} of the configuration models have been
studied to avoid parallel edges for the regular graphs.  By using the Havel-Hakimi
method~\cite{havelHakimi}, a deterministic graph can be generated following
a given degree sequence.
Bayati et al.~\cite{bayati2010sequential} presented an algorithm for counting
and generating random simple graphs with given degree sequences.  
However, this algorithm does not guarantee to always generate a graph, and it
is shown that the probability of not generating a graph is
small for a certain bound on the maximum degree, which restricts many degree
sequences.
Genio et al.~\cite{del2010efficient} presented an algorithm to generate a random
graph from a given degree sequence, which can be used in sampling graphs from
the graphical realizations of a degree sequence.  Blitzstein et 
al.~\cite{blitzstein2011sequential} also proposed a sequential importance
sampling~\cite{liu1998sequential} algorithm to generate random graphs with an
exact given degree sequence, which can generate every possible graph
with the given degree sequence with a non-zero probability.  Moreover, the
distribution of the generated graphs can be estimated, which is a much-desired
result used in sampling random graphs.

A deterministic parallel algorithm for generating a simple graph with a given
degree sequence has been presented by Arikati et al.~\cite{arikati1996realizing},
which runs in~$\mathcal{O}\left(\log n\right)$ time using~$\mathcal{O}\left(n+m\right)$
CRCW PRAM~\cite{karp1988survey} processors, where $n$ and $m$ denote the number of
vertices and edges in the graph, respectively.  From a given degree sequence, the
algorithm first computes an appropriate bipartite sequence (degree sequence of
a bipartite graph), generates a deterministic bipartite graph obeying the bipartite
sequence, applies some edge swap techniques to generate a symmetric bipartite graph,
and then reduces the symmetric bipartite graph to a simple graph having the
given degree sequence.  
Another parallel algorithm, with a time complexity of~$\mathcal{O}\left(\log^4 n\right)$
using~$\mathcal{O}\left(n^{10}\right)$ EREW PRAM processors, has been presented
in~\cite{dessmark1994parallel}, where the maximum degree is bounded by the
square-root of the sum of the degrees, which restricts many degree sequences.  
A parallel algorithm for generating
a random graph with a given \textit{expected} degree sequence has been presented
in~\cite{alam2015parallel}.  However, there is no existing parallel algorithm for
generating random graphs following an exact degree sequence, which can provably generate
each possible graph, having the given degree sequence, with a positive probability.
In this paper, we present an efficient parallel algorithm for generating a
random graph with an exact given degree sequence.  We choose to parallelize the
sequential algorithm by Blitzstein et al.~\cite{blitzstein2011sequential} because
of its rigorous mathematical and theoretical results, and the algorithm supports
all of the important and much-desired properties below, whereas the other algorithms
are either heuristic-based or lack some of the following properties:
\begin{itemize}
	\item It can construct a random simple graph with a prescribed degree sequence.
	\item It can provably generate each possible graph, obeying the given degree
	sequence, with a positive probability.
	\item It can be used in importance sampling by explicitly measuring
	the weights associated with the generated graphs.
	\item It is guaranteed to terminate with a graph having the prescribed
	degree sequence.
	\item Given a degree sequence of a tree, a small tweak
	while assigning the edges allows the same algorithm to generate 
	trees uniformly at random.
	\item It can be used in estimating the number of possible graphs 
	with the given degree sequence.
\end{itemize}

\textbf{Our Contributions.}
In this paper, we present an efficient shared-memory parallel algorithm
for generating random graphs with exact given degree sequences.  
The dependencies among
assigning edges to vertices in a particular order to ensure the algorithm always
successfully terminates with a graph, the requirement of keeping the graph simple,
maintaining an exact stochastic process as that of the sequential algorithm, 
and concurrent writing by multiple cores in the global address space lead to significant
challenges in designing a parallel algorithm.  Dealing with these requires complex
synchronization among the processing cores.  Our parallel algorithm achieves a
maximum speedup of $20.5$ with $32$ cores.  We also present a comparative study 
of various structural properties of the random graphs generated by the parallel
algorithm with that of the real-world graphs.  One of the steps in our parallel
algorithm requires checking the graphicality of a given degree sequence, i.e.,
whether there exists a graph with the degree sequence, using the Erd\H{o}s-Gallai
characterization~\cite{erdHos1961grafok} in parallel.  
We present here a novel parallel algorithm for checking the Erd\H{o}s-Gallai
characterization, which achieves a speedup of $23$ using $32$ cores.

\textbf{Organization.}
The rest of the paper is organized as follows.  Section~\ref{sec:preliminaries}
describes the preliminaries and notations used in the paper.  Our main
parallel algorithm for generating random graphs along with the experimental
results are presented in Section~\ref{sec:generate-random-network}.  
We present a parallel algorithm for checking the Erd\H{o}s-Gallai characterization
of a given degree sequence accompanied by the performance evaluation of the algorithm
in Section~\ref{sec:erdos-gallai}.  Finally, we conclude in Section~\ref{sec:conclusion}.

\section{Preliminaries}
\label{sec:preliminaries}
Below are the notations, definitions, and computation model
used in this paper.

\textbf{Notations.}
We use $\mathbb{G}=(\mathbb{V},\mathbb{E})$ to denote a simple graph, where
$\mathbb{V}$ is the set of vertices and $\mathbb{E}$ is the set of edges.
A \textit{self-loop} is an edge from a vertex to itself. \textit{Parallel edges}
are two or more edges connecting the same pair of vertices.  A \textit{simple
graph} is an undirected graph with no self-loops or parallel edges.  
We are given a \textit{degree sequence} $\mathbb{D} = (d_1,d_2, \ldots, d_n)$. 
There are a total of $n=|\mathbb{V}|$ vertices labeled as $1,2, \ldots, n,$ and
$d_i$ is the degree of vertex $i$, where $0 \le d_i \le n-1$. 
For a degree sequence $\mathbb{D}$ and distinct $u,v \in \{1,2,\ldots,n\}$,
we define $\ominus_{u,v}^\mathbb{D}$ to be the degree sequence obtained
from $\mathbb{D}$ by subtracting 1 from each of $d_u$ and $d_v$.  
Let $d'_j$ be the degree of vertex $j$ in the
degree sequence $\ominus_{u,v}^\mathbb{D}$, then
\begin{equation}
d'_j =
\begin{cases}
d_j -1  & \text{ if } j \in \{u,v\}\text{,}\\
d_j  	& \text{ otherwise.}\\
\end{cases}
\end{equation}
If there is a simple graph $\mathbb{G}$ having the degree sequence $\mathbb{D}$,
then there are $m=|\mathbb{E}|$ edges in $\mathbb{G}$, where $2m = \sum_i d_i$.
The terms graph and network are used interchangeably throughout the paper.  
We use \texttt{K}, \texttt{M}, and \texttt{B} to denote
thousands, millions, and billions, respectively; e.g., \texttt{1M} stands
for one million. For the parallel algorithms, let $\mathcal{P}$ be the number
of processing cores, and $\mathbb{P}_k$ the core with rank~$k$,
where $0 \le k < \mathcal{P}$.  
A summary of the frequently used notations (some of them are introduced
later for convenience) is provided in Table~\ref{table:notation}.

\begin{table}[tb!]
	\caption{Notations used frequently in the paper.} \label{table:notation}
	\ra{1.3}
	\centering
	\begin{tabular}{@{}llll@{}}
		\toprule[1.3pt]
		\multicolumn{1}{@{}l@{}}{\textbf{Symbol}} &
		\multicolumn{1}{@{}l@{}}{\textbf{~~Description}} &
		\multicolumn{1}{@{}l@{}}{\textbf{~Symbol}} &
		\multicolumn{1}{@{}l@{}}{\textbf{~~Description}} 
		\\
		\midrule[0.8pt]
		$\mathbb{D}$ & Degree sequence & $d_i$ & Degree of vertex $i$ \\
		$\mathbb{V}$ & Set of vertices & $n$ & Number of vertices \\
		$\mathbb{E}$ & Set of edges & $m$ & Number of edges \\
		$\mathcal{P}$ & Number of cores & $\mathbb{P}_k$ & Core with rank $k$ \\
		$\mathbb{C}$ & Candidate set & $\mathcal{C}$ & Corrected Durfee number \\
		$\mathbb{G}$ & Graph & \texttt{K} & Thousands \\
		\texttt{M} & Millions & \texttt{B} & Billions \\
		\bottomrule[1.3pt]
	\end{tabular}
	\vspace{-0.15in}
\end{table}

\textbf{Residual Degree.}
During the course of a graph generation process,
the \textit{residual degree} of a vertex $u$ is the remaining number
of edges incident on $u$, which have not
been created yet.  From hereon, we refer to the degree $d_u$ of a vertex $u$ as 
the residual degree of $u$ at any given time, unless otherwise specified.

\textbf{Graphical Sequence.} A degree sequence $\mathbb{D}$ of non-negative 
integers is called \textit{graphical} if there exists a labeled simple graph with
vertex set $\{1,2,\ldots,n\}$, where vertex~$i$ has degree~$d_i$.  Such
a graph is called a \textit{realization} of the degree sequence~$\mathbb{D}$.
Note that there can be several graphs having the same degree sequence.
Eight equivalent necessary and sufficient conditions for testing the graphicality
of a degree sequence are listed in~\cite{mahadev1995threshold}. Among them, the
Erd\H{o}s-Gallai characterization~\cite{erdHos1961grafok} is the most famous
and frequently used criterion.  Another popular recursive test for checking a
graphical sequence is the Havel-Hakimi method~\cite{havelHakimi}.

\textbf{Erd\H{o}s-Gallai Characterization~\cite{erdHos1961grafok}.}  
Assuming a given degree sequence $\mathbb{D}$ is sorted in non-increasing order, 
i.e., $d_1 \ge d_2 \ge \ldots \ge d_n$, the sequence $\mathbb{D}$ is
graphical if and only if $\sum_{i=1}^{n} d_i$ is even and
\begin{equation} \label{eqn:erdos-gallai}
	\textnormal{for each } k\in \{1,2,\ldots,n\}, \sum_{i=1}^{k}d_i \le k(k-1) + \sum_{i=k+1}^{n} \min(k,d_i).
\end{equation}
For example, $\mathbb{D}_1 = (3,3,2,2,2)$ is a graphical sequence and there is a 
realization of $\mathbb{D}_1$ as it satisfies the Erd\H{o}s-Gallai characterization,
whereas $\mathbb{D}_2 = (4,3,2,1)$ is not a graphical sequence and there is no
simple graph realizing $\mathbb{D}_2$, as shown in Figs.~\ref{fig:erdos-gallai}
and~\ref{fig:realization}.

\begin{figure}[htb]
	\begin{minipage}[t]{.32\textwidth}
		\begin{center}
		\centerline{\includegraphics[width=1.0\textwidth]{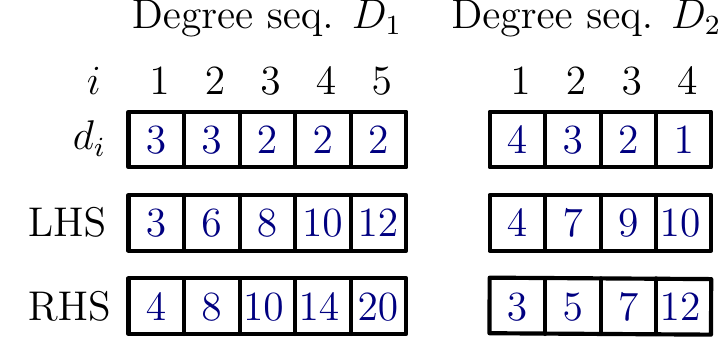}}
		\caption{Graphicality check for the degree sequences
			$\mathbb{D}_1 = (3,3,2,2,2)$ and $\mathbb{D}_2 = (4,3,2,1)$ 
			using the Erd\H{o}s-Gallai characterization, where LHS and RHS
			denote the left hand side and right hand side values of 
			Eq.~(\ref{eqn:erdos-gallai}), respectively.}
		\label{fig:erdos-gallai}
		\end{center}
	\end{minipage}
	\hfill
	\begin{minipage}[t]{.14\textwidth}
		\begin{center}
			\centerline{\includegraphics[width=1.0\textwidth]{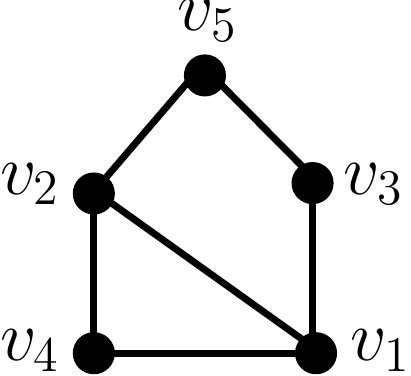}}
			\caption{A simple graph realizing the degree sequence $\mathbb{D}_1 = (3,3,2,2,2)$.}
			\label{fig:realization}
		\end{center}
	\end{minipage}
	\hfill
	\vspace{-0.15in}
\end{figure}

\textbf{Computation Model.}  We develop algorithms for
shared-memory parallel systems.  All the cores can read from and write to
the global address space.  In addition, each core can have its own local
variables and data structures. 
\section{Generating Random Graphs with Prescribed Degree Sequences}
\label{sec:generate-random-network}
We briefly discuss the sequential algorithm in Section~\ref{sec:seq-algo-graph-generate}.
Then we present our parallel algorithm in Section~\ref{sec:par-algo-graph-generate}
and the experimental results in Section~\ref{sec:experiment-graph-generate}.

\subsection{Sequential Algorithm}
\label{sec:seq-algo-graph-generate}
Blitzstein et al.~\cite{blitzstein2011sequential} presented a sequential
importance sampling~\cite{liu1998sequential} algorithm for generating random
graphs with exact prescribed degree sequences.  This approach first creates
all edges incident on the vertex having the minimum degree in the sequence,
then moves to the next vertex having the minimum degree to create its incident
edges and so on.  To create an edge incident on a vertex $u$, a candidate list
$\mathbb{C}$ is computed using the Erd\H{o}s-Gallai characterization such
that, after adding an edge by connecting $u$ to any candidate vertex $v$
from the list $\mathbb{C}$, the residual degree sequence remains graphical 
and the graph remains simple. 
Then an edge $(u,v)$ is assigned by choosing $v$ from the candidate list 
$\mathbb{C}$ with a probability proportional to the degree of $v$.  This 
process is repeated until all edges incident on vertex $u$ are assigned.

For example, for a given degree sequence $\mathbb{D}=(3,3,2,2,2)$, the algorithm
starts by assigning edges incident on vertex $v_3$.  It computes the candidate list
$\mathbb{C} = \{v_1,v_2,v_4,v_5\}$.  Say it chooses the vertex~$v_5$ from~$\mathbb{C}$
and assigns the edge~$(v_3,v_5)$.  Then the new degree sequence is $\mathbb{D}=(3,3,1,2,1)$,
and the new candidate list for assigning the remaining edge incident on vertex~$v_3$ is
$\mathbb{C} = \{v_1,v_2\}$.  Say the algorithm selects~$v_1$ from~$\mathbb{C}$ and
assigns the edge~$(v_3,v_1)$.  Now the new degree sequence is $\mathbb{D}=(2,3,0,2,1)$,
and the algorithm will proceed to assign edges incident on vertex~$v_5$ and so on.
One possible sequence of degree sequences is 
\begin{equation*} \label{eqn:deg-seq}
	\begin{split}
		(3,3,2,2,2) \rightarrow (3,3,1,2,1) \rightarrow (2,3,0,2,1) \rightarrow (2,2,0,2,0) \\
		\rightarrow (1,2,0,1,0) \rightarrow (0,1,0,1,0) \rightarrow (0,0,0,0,0),
	\end{split}
\end{equation*}
with the corresponding edge set
\begin{equation*} \label{eqn:edge-set}
	\mathbb{E} = \{(v_3,v_5), (v_3,v_1), (v_5,v_2), (v_1,v_4), (v_1,v_2), (v_2,v_4)\}.
\end{equation*}

\begin{center}
	\begin{minipage}{0.48\textwidth}
		\small{
			\noindent\fbox{%
				\begin{minipage}{\dimexpr0.98\linewidth-2\fboxsep-2\fboxrule\relax}
					\begin{algorithmic}[1]
						\State $\mathbb{E} \leftarrow \emptyset$ \hspace{0.1in} \texttt{$\triangleright$ initially empty set of edges}
						\State \textbf{while} $\mathbb{D} \ne \textbf{0}$ \textbf{do}
						\State \hspace{0.1in} Select the least $u$ such that $d_u$ is a minimal positive degree in $\mathbb{D}$
						\State \hspace{0.1in} \textbf{while} $d_u \ne 0$ \textbf{do} 
						\State \hspace{0.2in} $\mathbb{C} \leftarrow \{ v \ne u: (u,v) \notin \mathbb{E}~\bigwedge~ \ominus_{u,v}^\mathbb{D}$ is graphical$\}$ \label{seq-gen:a}
						\State \hspace{0.2in} $v \leftarrow$ a random candidate in $\mathbb{C}$ where probability of selecting $v$ is proportional to $d_v$ \label{seq-gen:b}
						\State \hspace{0.2in} $\mathbb{E} \leftarrow \mathbb{E} \cup \{(u,v)\} $
						\State \hspace{0.2in} $\mathbb{D} \leftarrow \ominus_{u,v}^\mathbb{D}$
						\State Output $\mathbb{E}$
					\end{algorithmic}
				\end{minipage}
			}
		}
		\captionof{figure}{A sequential algorithm~\cite{blitzstein2011sequential} 
			for generating a random graph with a given degree sequence.}
		\label{algo:seq-graph-generation-algo}
	\end{minipage}
\end{center}
The corresponding graph is shown in Fig.~\ref{fig:realization}.
Note that during the assignment of incident edges on a vertex $u$, a candidate
at a later stage is also a candidate at an earlier stage.  The pseudocode of the
algorithm is shown in Fig.~\ref{algo:seq-graph-generation-algo}.  Since a total
of $m$ edges are generated for the graph $\mathbb{G}$ and computing the candidate
list (Line~\ref{seq-gen:a}) for each edge takes $\mathcal{O} \left ( n^2\right )$
time, the time complexity of the algorithm is $\mathcal{O} \left ( mn^2\right )$.

Unlike many other graph generation algorithms, this method never gets stuck,
i.e., it always terminates with a graph realizing the given degree sequence
(proof provided in Theorem~$3$ in~\cite{blitzstein2011sequential}) or creates
loops or parallel edges through the computation of the candidate list using the 
Erd\H{o}s-Gallai characterization.  The algorithm can generate every possible
graph with a positive probability (proof given in Corollary~$1$
in~\cite{blitzstein2011sequential}).  For additional details about the importance
sampling and estimating the number of graphs for a given degree sequence, see
Sections~$8$ and~$9$ in~\cite{blitzstein2011sequential}; and we 
omit the details in this paper due to space constraints.

%
%
\subsection{Parallel Algorithm}
\label{sec:par-algo-graph-generate}
To design an exact parallel version by maintaining the same stochastic process 
(in order to retain the same theoretical and mathematical results) as that
of the sequential algorithm, the vertices are considered (to assign their
incident edges) in the same order in the parallel algorithm, i.e., in
ascending order of their degrees.  Hence, we emphasize parallelizing the
computation of the candidate list~$\mathbb{C}$, i.e., Line~\ref{seq-gen:a} of
the sequential algorithm in Fig.~\ref{algo:seq-graph-generation-algo}.  
For computing the candidate list to assign edges incident on a vertex~$u$,
we need to consider all other vertices~$v$ with non-zero degrees~$d_v$ as potential
candidates; and we parallelize this step.  While considering a particular
vertex~$v$ as a candidate, we need to check whether~$\ominus_{u,v}^\mathbb{D}$
is a graphical sequence using the Erd\H{o}s-Gallai characterization.  
If~$\ominus_{u,v}^\mathbb{D}$ is graphical, then~$v$ is added to the candidate
list~$\mathbb{C}$.  The time complexity of the best known sequential algorithm
for testing the Erd\H{o}s-Gallai characterization is~$\mathcal{O} \left ( n \right 
)$~\cite{ivanyi2011erdos,blitzstein2011sequential}.  Thus to have an efficient
parallel algorithm for generating random graphs, we need to use an efficient
parallel algorithm for checking the Erd\H{o}s-Gallai characterization.
In Section~\ref{sec:erdos-gallai}, we present an efficient
parallel algorithm for checking the Erd\H{o}s-Gallai characterization that
runs in~$\mathcal{O} \left ( \frac{n}{\mathcal{P}} + \log \mathcal{P} \right )$
time.  The parallel algorithm for the Erd\H{o}s-Gallai characterization returns
\textit{TRUE} if the given degree sequence is graphical and \textit{FALSE} otherwise.

\begin{center}
	\begin{minipage}{0.48\textwidth}
		\small{
			\noindent\fbox{%
				\begin{minipage}{\dimexpr0.98\linewidth-2\fboxsep-2\fboxrule\relax}
					\begin{algorithmic}[1]
						
						\State $\mathbb{E} \leftarrow \emptyset$ \hspace{0.1in} \texttt{$\triangleright$ initially empty set of edges}
						
						\vspace{0.1in}
						\Statex \texttt{$\triangleright$ Assign the edges until the degree of each vertex reduces to 0}
						\State  \textbf{while} $\mathbb{D} \ne \textbf{0}$ \textbf{do}		
						\State  \hspace{0.1in} Select the least $u$ such that $d_u$ is a minimal positive degree in $\mathbb{D}$ \label{PRGG:p}
						\State  \hspace{0.1in} $\mathbb{C} \leftarrow \emptyset$ \hspace{1.1in}\texttt{$\triangleright$ candidate list}
						
						\vspace{0.1in}
						\Statex \hspace{0.1in}  \texttt{$\triangleright$ Assign all $d_u$ edges incident on $u$}
						\State  \hspace{0.1in}  \textbf{while} $d_u \ne 0$ \textbf{do} \label{PRGG:q}
						\State  \hspace{0.25in} \textbf{if} $\mathbb{C} = \emptyset$ \textbf{then} \label{PRGG:w}
						\State  \hspace{0.4in}  $\mathbb{F} \leftarrow \{ v \ne u: (u,v) \notin \mathbb{E}~\bigwedge~d_v > 0\}$ \label{PRGG:r}
						\State  \hspace{0.25in} \textbf{else} 
						\State  \hspace{0.4in}  $\mathbb{F} \leftarrow \mathbb{C}$ \label{PRGG:s}
						\State  \hspace{0.4in}  $\mathbb{C} \leftarrow \emptyset$ \label{PRGG:t}
						
						\vspace{0.1in}
						\Statex \hspace{0.25in} \texttt{$\triangleright$ Compute the candidate list}
						\State  \hspace{0.25in} \textbf{for} \textit{each} $v \in \mathbb{F}$ \textbf{\textit{in parallel} do} \label{PRGG:e}
						\State  \hspace{0.4in}  $flag \leftarrow $ \textsc{Parallel-Erd\H{o}s-Gallai} $\left ( \ominus_{u,v}^\mathbb{D} \right )$ \label{PRGG:f}
						\State  \hspace{0.4in}  \textbf{if} $flag = TRUE$ \textbf{then}
						\State  \hspace{0.55in} $\mathbb{C} \leftarrow \mathbb{C} \cup \{v\}$ \hspace{0.2in}\texttt{$\triangleright$ $v$ is a candidate} \label{PRGG:u}		
						
						\vspace{0.1in}
						\State \hspace{0.25in} \textbf{if} $d_u = |\mathbb{C}|$ \textbf{then} \label{PRGG:a}
						\State \hspace{0.4in} \textbf{for} \textit{each} $v \in \mathbb{C}$ \textbf{\textit{in parallel} do}
						\State \hspace{0.55in} $\mathbb{E} \leftarrow \mathbb{E} \cup \{(u,v)\}$ 
						\State \hspace{0.55in} $\mathbb{D} \leftarrow \ominus_{u,v}^\mathbb{D}$
						\State \hspace{0.4in} \textbf{break} \label{PRGG:b}
						
						\vspace{0.1in}
						\Statex \hspace{0.25in} \texttt{$\triangleright$ Assign an edge $(u,v)$ from $\mathbb{C}$}
						\State  \hspace{0.25in} $v \leftarrow$ a random candidate in $\mathbb{C}$ where probability of selecting $v$ is proportional to $d_v$ \label{PRGG:c}
						\State  \hspace{0.25in} $\mathbb{E} \leftarrow \mathbb{E} \cup \{(u,v)\}$ \label{PRGG:d}
						\State  \hspace{0.25in} $\mathbb{D} \leftarrow \ominus_{u,v}^\mathbb{D}$ \label{PRGG:g}
						\State  \hspace{0.25in} $\mathbb{C} \leftarrow \mathbb{C} - \{v\}$  \label{PRGG:h}
						
						\vspace{0.1in}
						\State Output $\mathbb{E}$ \hspace{0.3in} \texttt{$\triangleright$ final set of edges}
					\end{algorithmic}
				\end{minipage}
			}
		}
		\captionof{figure}{A parallel algorithm for generating a random graph
			with a prescribed degree sequence.}
		\label{algo:par-graph-generation-algo}
	\end{minipage}
\end{center}

Once the candidate list is computed, if the degree of~$u$ is equal to the
cardinality~$|\mathbb{C}|$ of the candidate list, then new edges are assigned
between~$u$ and all candidate vertices~$v$ in
the candidate list~$\mathbb{C}$ in parallel. 
Otherwise, like the sequential algorithm, a candidate vertex~$v$ is chosen randomly
from~$\mathbb{C}$, a new edge~$(u,v)$ is assigned, the degree sequence~$\mathbb{D}$
is updated by reducing the degrees of each of~$u$ and~$v$ by~$1$, and the process is
repeated until~$d_u$ is reduced to $0$.  After assigning all edges
incident on vertex $u$, the algorithm proceeds with assigning edges incident
on the next vertex having the minimum positive degree in $\mathbb{D}$ and so on.
We present the pseudocode of our parallel algorithm for generating random
graphs in Fig.~\ref{algo:par-graph-generation-algo}.

\begin{figure*}[htb!]
	\hfill
	\begin{minipage}[t]{.475\textwidth}
		\begin{center}
			\centerline{\includegraphics[width=1.0\textwidth]{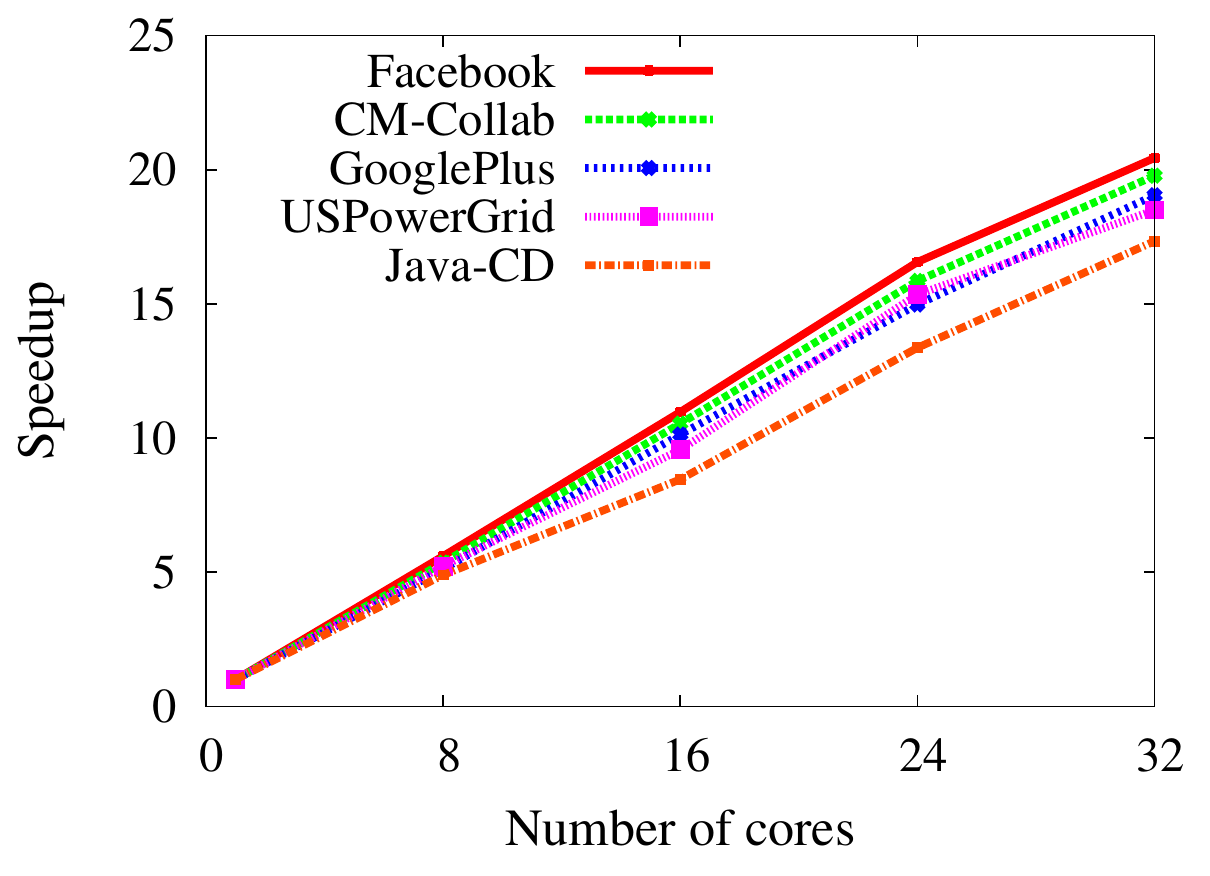}}
			\caption{Strong scaling of our parallel algorithm for generating
				random graphs on different data sets.}
			\label{fig:strong_scaling_graph_generate}
		\end{center}
	\end{minipage}
	\hfill
	\begin{minipage}[t]{.475\textwidth}
		\begin{center}
			\centerline{\includegraphics[width=1.0\textwidth]{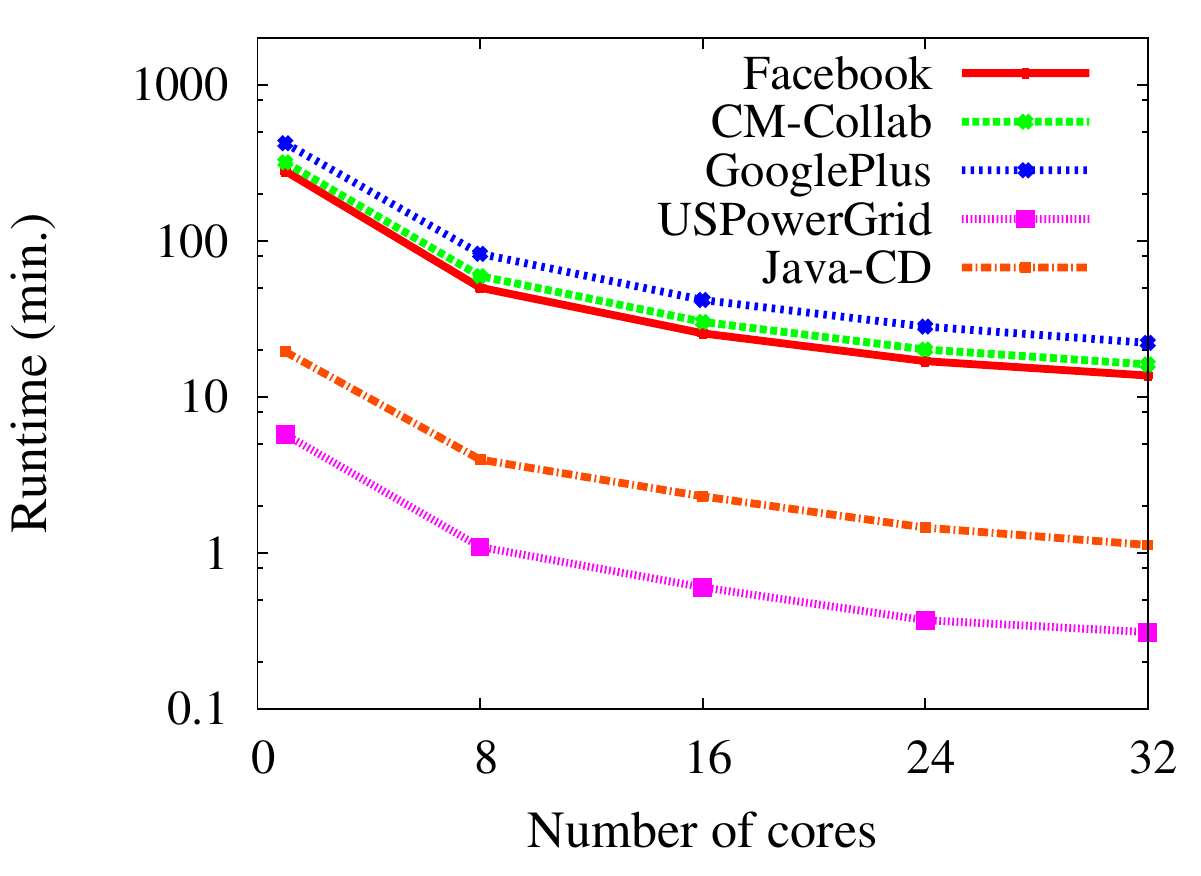}}
			\caption{Runtime of our parallel algorithm for generating
				random graphs on different data sets.}
			\label{fig:runtime_graph_generate}
		\end{center}
	\end{minipage}
	\hfill
	\vspace{-0.15in}
\end{figure*}

\begin{theorem}
	The parallel algorithm for generating random graphs maintains an exact
	stochastic process as that of the sequential algorithm and preserves all
	mathematical and theoretical results of the sequential algorithm.
\end{theorem}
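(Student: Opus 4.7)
The approach is to identify the three structural differences between the parallel algorithm in Fig.~\ref{algo:par-graph-generation-algo} and the sequential one in Fig.~\ref{algo:seq-graph-generation-algo}, and to show that none of them alters the distribution over edge additions. The three differences are: (i) the candidate list $\mathbb{C}$ is computed in parallel via \textsc{Parallel-Erd\H{o}s-Gallai}; (ii) after the first edge incident on $u$ is assigned, $\mathbb{F}$ is initialized from the previous $\mathbb{C}$ rather than from the full vertex set; and (iii) when $d_u = |\mathbb{C}|$, all remaining edges incident on $u$ are assigned simultaneously. The plan is to verify that each of these modifications is either a faithful reproduction of the same computation or a reformulation of an event that has conditional probability $1$ in the sequential process, from which preservation of all mathematical and theoretical consequences of~\cite{blitzstein2011sequential} follows immediately.

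For~(i), I would invoke the correctness of \textsc{Parallel-Erd\H{o}s-Gallai} (established in Section~\ref{sec:erdos-gallai}): it returns the same Boolean verdict on each $\ominus_{u,v}^\mathbb{D}$ as the sequential Erd\H{o}s-Gallai check, so the set $\mathbb{C}$ produced by the parallel inner loop is element-for-element the set that the sequential algorithm would compute at the same point. For~(ii), I would use the monotonicity property already stated in the text: while assigning edges incident on $u$, a candidate at a later stage is also a candidate at an earlier stage. Contrapositively, once a vertex $w$ fails to be a candidate at some stage, it never becomes a candidate again before $u$ is finished. Hence initializing $\mathbb{F}$ from the previous $\mathbb{C}$ (after removing the vertex just chosen) and re-checking each member is sound; the resulting $\mathbb{C}$ is identical to what the sequential algorithm would recompute from scratch.

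The main obstacle is (iii), and I would handle it as follows. When $d_u = |\mathbb{C}|$, vertex $u$ requires exactly $|\mathbb{C}|$ more neighbors and has exactly $|\mathbb{C}|$ legal choices, so in any valid completion $u$ must be adjacent to every vertex in $\mathbb{C}$. By the monotonicity property, after assigning any one edge $(u,v)$ with $v \in \mathbb{C}$, the remaining members of $\mathbb{C}\setminus\{v\}$ are still candidates, and $d_u$ and $|\mathbb{C}|$ decrease in lockstep, so this ``forced'' state persists until $u$ is saturated. Summing the sequential selection probabilities $d_{v_{\sigma(i)}}^{(i)}/\sum_{j\ge i} d_{v_{\sigma(j)}}^{(i)}$ over all orderings $\sigma$ of $\mathbb{C}$ must therefore total~$1$, so the sequential algorithm, conditioned on reaching this state, produces exactly the same final subgraph with probability~$1$. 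The parallel ``batch'' assignment simply realizes this same forced event in a single step. Combined with (i) and (ii), this shows that for every realization $G$, the probability that the parallel algorithm outputs $G$ equals the corresponding sequential probability. All downstream results of~\cite{blitzstein2011sequential}---guaranteed termination with a valid realization (Theorem~3), positive probability of every realization (Corollary~1), and the importance-sampling weight formulas of Sections~8 and~9---then carry over verbatim.
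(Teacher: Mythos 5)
Your proposal is correct and follows essentially the same route as the paper's proof: it identifies the same three deviations from the sequential algorithm (parallel candidate-list computation, reuse of the previous $\mathbb{C}$ as $\mathbb{F}$ via the monotonicity of candidacy, and the batch assignment when $d_u = |\mathbb{C}|$) and argues each preserves the stochastic process. Your treatment of the forced case $d_u = |\mathbb{C}|$ is in fact more careful than the paper's, which dismisses that step as obvious, whereas you correctly note it is a probability-one event in the sequential process (relying implicitly on the never-gets-stuck guarantee to conclude the forced state persists).
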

\begin{proof}
	The parallel algorithm always selects the vertex $u$ with the minimum degree
	in the sequence (Line~\ref{PRGG:p}), assigns~$d_u$ edges incident on~$u$ 
	(Lines~\ref{PRGG:q}-\ref{PRGG:h}), and then proceeds with the next vertex
	in the sequence as the sequential algorithm would do.  While assigning the first
	edge incident on a vertex~$u$, all vertices in the sequence that do not create
	self-loops or parallel edges are considered as potential candidates~$\mathbb{F}$
	(Line~\ref{PRGG:r}), whereas for assigning the subsequent edges incident
	on~$u$, the candidates~$\mathbb{C}$ in an earlier stage are considered as
	the potential candidates~$\mathbb{F}$ (Line~\ref{PRGG:s}) in a later stage.
	The candidate list is then computed in parallel by checking whether an edge
	can be assigned between~$u$ and a potential candidate~$v$ in~$\mathbb{F}$
	by checking whether the residual degree sequence~$\ominus_{u,v}^\mathbb{D}$,
	if an edge $(u,v)$ is assigned,	is a graphical sequence by using the parallel
	algorithm for the Erd\H{o}s-Gallai characterization (Lines~\ref{PRGG:e}-\ref{PRGG:u}).  
	If the cardinality of the candidate list is equal to the degree $d_u$ of vertex
	$u$, then edges are assigned between~$u$ and all vertices~$v$ in the candidate
	list~$\mathbb{C}$ in parallel (Lines~\ref{PRGG:a}-\ref{PRGG:b}).  Although this
	step is not explicitly mentioned in the sequential algorithm, this is obvious
	since the sequential algorithm would assign all $d_u$ edges incident on $u$
	and there are no additional candidates other than the $d_u$ candidates 
	in~$\mathbb{C}$.  We parallelize this step to improve the performance of the
	algorithm.  If the candidate list~$\mathbb{C}$ has more than~$d_u$ candidates,
	then a vertex~$v$ is selected randomly from~$\mathbb{C}$ with probability
	proportional to~$d_v$, and an edge~$(u,v)$ is assigned, as the sequential
	algorithm would do.  Hence, the parallel algorithm maintains an exact
	stochastic process as that of the sequential algorithm.  As a consequence,
	all mathematical and theoretical results (except the time complexity) of the
	sequential algorithm are applicable to our parallel algorithm as well.
\end{proof}
\begin{theorem}
	The time complexity of each of the core $\mathbb{P}_k$ 
	in the parallel algorithm for generating random graphs is
	$\mathcal{O} \left ( mn \left( \frac{n}{\mathcal{P}} + \log \mathcal{P} \right ) \right )$.
\end{theorem}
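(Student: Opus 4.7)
The plan is to bound the per-core work by charging everything to the two enclosing \texttt{while} loops and showing that the dominant cost is the repeated graphicality checks. Since each iteration of the inner \texttt{while} loop (Line~\ref{PRGG:q}) either assigns exactly one edge or, in the ``$d_u=|\mathbb{C}|$'' shortcut of Lines~\ref{PRGG:a}--\ref{PRGG:b}, assigns $d_u$ edges at once and then breaks out, the combined body of the two loops executes at most $m$ times over the course of the algorithm (one invocation per edge that gets added to~$\mathbb{E}$). This already captures the leading $m$ factor; the remaining task is to bound the per-edge cost per core.

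Next I would analyze one body of the inner loop. The only non-constant steps are (a) initializing or refreshing the potential-candidate set~$\mathbb{F}$ on Lines~\ref{PRGG:r} or~\ref{PRGG:s}--\ref{PRGG:t}, (b) the parallel loop of Lines~\ref{PRGG:e}--\ref{PRGG:u} that tests every $v\in\mathbb{F}$ for graphicality via \textsc{Parallel-Erd\H{o}s-Gallai}, and (c) the random selection and bookkeeping on Lines~\ref{PRGG:c}--\ref{PRGG:h}. Since $|\mathbb{F}|\le n-1$ and a fresh~$\mathbb{F}$ constructed on Line~\ref{PRGG:r} can be built by a prefix-style scan over the degree array, step~(a) costs each core $\mathcal{O}(n/\mathcal{P}+\log\mathcal{P})$, which is absorbed into the cost of step~(b). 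For~(c), both the weighted random draw and the updates can be done with a parallel prefix sum over the $|\mathbb{C}|\le n$ degrees, again giving $\mathcal{O}(n/\mathcal{P}+\log\mathcal{P})$ per core. Finally, the optional Lines~\ref{PRGG:a}--\ref{PRGG:b} block adds at most $n$ edges in parallel, contributing $\mathcal{O}(n/\mathcal{P})$ per core.

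The heart of the bound is step~(b). The invocation of \textsc{Parallel-Erd\H{o}s-Gallai} on~$\ominus_{u,v}^{\mathbb{D}}$ costs each core $\mathcal{O}(n/\mathcal{P}+\log\mathcal{P})$ by the result cited from Section~\ref{sec:erdos-gallai}. Distributing the $|\mathbb{F}|\le n$ values of~$v$ across the iterations of the parallel \texttt{for} on Line~\ref{PRGG:e}, each core performs at most $n$ such graphicality tests in the worst case (either because all $\mathcal{P}$ cores collaborate on one test at a time, or because a single core is handed up to $n$ tests when $\mathcal{P}$ divides $|\mathbb{F}|$ poorly). Either accounting gives a per-edge, per-core cost of $\mathcal{O}\!\bigl(n\cdot(n/\mathcal{P}+\log\mathcal{P})\bigr)$, and multiplying by the $m$ edges generated overall yields the claimed $\mathcal{O}\!\bigl(mn(n/\mathcal{P}+\log\mathcal{P})\bigr)$.

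The subtle step I expect to be the main obstacle is exactly this apportionment of work in step~(b): the pseudocode mixes a parallel \texttt{for} with a nested parallel subroutine, so I have to argue carefully that under the shared-memory model it is legitimate to sum the per-test costs rather than merely take a max, and that the write into~$\mathbb{C}$ on Line~\ref{PRGG:u} is non-blocking (or protected by a constant-cost atomic) so that it does not inflate the bound. Once that is spelled out, the rest is bookkeeping: dominated terms from (a), (c), and the break-branch all collapse into the stated expression.
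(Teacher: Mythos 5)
Your decomposition — $m$ edges, each requiring up to $n$ graphicality tests over the potential candidates, each test costing $\mathcal{O}(n/\mathcal{P}+\log\mathcal{P})$ per core via \textsc{Parallel-Erd\H{o}s-Gallai} — is exactly the paper's argument, which simply multiplies these three factors together. Your additional bookkeeping for constructing $\mathbb{F}$, the weighted draw, and the nested-parallelism accounting fills in details the paper leaves implicit, but the approach is essentially the same.
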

\begin{proof}
	The parallel algorithm assigns $m$ edges one by one.  To assign an edge
	incident on a vertex $u$, it computes the candidate list in parallel (Line~\ref{PRGG:e}).  
	Whether a vertex $v$ is a candidate is computed using the parallel algorithm
	for the Erd\H{o}s-Gallai characterization (Line~\ref{PRGG:f}), which has a time complexity of
	$\mathcal{O} \left (\frac{n}{\mathcal{P}} + \log \mathcal{P} \right )$.  
	Hence, the time complexity of the parallel algorithm for generating random
	graphs is $\mathcal{O} \left ( mn \left(\frac{n}{\mathcal{P}} + \log \mathcal{P}
	\right ) \right )$.
\end{proof}

\begin{theorem}
	The space complexity of the parallel algorithm for generating random graphs is
	$\mathcal{O} \left ( m+n \right )$.
\end{theorem}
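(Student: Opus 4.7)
The plan is to account for the storage consumed by every data structure the algorithm keeps live, show that each is bounded by $\mathcal{O}(m+n)$, and then sum them up using the fact that a finite number of such structures is maintained.

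First I would enumerate the global shared structures. The input degree sequence $\mathbb{D}$ is stored once and requires $\mathcal{O}(n)$ space. The output edge set $\mathbb{E}$ grows monotonically and contains at most $m = \frac{1}{2}\sum_i d_i$ edges, so it occupies $\mathcal{O}(m)$ space. The potential candidate set $\mathbb{F}$ and the candidate set $\mathbb{C}$ are both subsets of the vertex set, hence each occupies at most $\mathcal{O}(n)$ space. A bookkeeping structure indicating which pairs $(u,v)$ already appear in $\mathbb{E}$ is needed to enforce Line~\ref{PRGG:r}, but by keeping an adjacency list indexed on the smaller endpoint this can be folded into the same $\mathcal{O}(m+n)$ budget.

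Next I would account for the work performed by the parallel subroutine on Line~\ref{PRGG:f}. The invocation of \textsc{Parallel-Erd\H{o}s-Gallai} on $\ominus_{u,v}^\mathbb{D}$ does not need to materialize a fresh copy of the sequence: the modification affects only the two entries $d_u, d_v$, which can be supplied as a pair of overrides on top of the shared $\mathbb{D}$, so each invocation adds only $\mathcal{O}(1)$ new storage beyond whatever scratch space the Erd\H{o}s-Gallai checker itself uses. By the construction given in Section~\ref{sec:erdos-gallai}, that checker runs in $\mathcal{O}(n/\mathcal{P}+\log\mathcal{P})$ time and, by inspection, uses $\mathcal{O}(n)$ auxiliary storage (for prefix sums and the like). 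Per-core local scratch contributes a further $\mathcal{O}(n/\mathcal{P})$ per core, summing to $\mathcal{O}(n)$ across all $\mathcal{P}$ cores.

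Summing the contributions gives $\mathcal{O}(m) + \mathcal{O}(n) = \mathcal{O}(m+n)$, which is the claimed bound. The main obstacle I anticipate is the second step: ensuring that the parallel evaluations of \textsc{Parallel-Erd\H{o}s-Gallai} performed in parallel across the vertices $v \in \mathbb{F}$ on Line~\ref{PRGG:e} do not each spawn an independent $\Theta(n)$ scratch region, which would blow the bound up to $\mathcal{O}(n^2)$. I would resolve this by noting that only one such invocation is active per core at a time, so the total scratch allocated at any instant is $\mathcal{O}(\mathcal{P} \cdot n/\mathcal{P}) = \mathcal{O}(n)$, which is absorbed into the stated bound.
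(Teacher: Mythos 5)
Your approach is essentially the paper's own --- inventory the live data structures and bound each one --- but far more thorough: the paper's proof is a single sentence that counts only the degree sequence ($\mathcal{O}(n)$) and the edge set ($\mathcal{O}(m)$), saying nothing about $\mathbb{F}$, $\mathbb{C}$, the adjacency lookup needed to enforce Line~\ref{PRGG:r}, or the scratch space of the Erd\H{o}s-Gallai checker. Your accounting of those items, and in particular the observation that $\ominus_{u,v}^{\mathbb{D}}$ need not be materialized because it differs from $\mathbb{D}$ in only two entries, is exactly what a complete proof needs and what the paper omits.

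One step deserves a second look. Your claim that the $\mathcal{P}$ concurrent checker invocations cost $\mathcal{O}(\mathcal{P}\cdot n/\mathcal{P})=\mathcal{O}(n)$ presumes each invocation needs only $\mathcal{O}(n/\mathcal{P})$ storage per core. In the paper's checker, the prefix-sum array $H$ and the weight array $w$ each have length $\Theta(n)$ \emph{per degree sequence being tested}; the $1/\mathcal{P}$ factor applies to the work each core performs on those shared arrays, not to their size. If each core independently tests a different $\ominus_{u,v}^{\mathbb{D}}$ (one reading of the parallel loop at Line~\ref{PRGG:e}), the weight arrays genuinely differ across invocations and the simultaneous scratch is $\Theta(\mathcal{P}n)$, not $\Theta(n)$ --- precisely the blow-up you flagged as the main obstacle. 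The $\mathcal{O}(m+n)$ bound survives under the execution model implied by the paper's own time analysis (candidates tested one at a time, with all $\mathcal{P}$ cores cooperating on a single shared $H$ and $w$), or if your $\mathcal{O}(1)$-override trick is extended from $\mathbb{D}$ to $H$ and $w$ as well. Either fix is straightforward, but as written the arithmetic in your final paragraph does not actually close the case you identified.
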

\begin{proof}
	Storing the degree sequence	and the edges take $\mathcal{O}
	\left ( n \right )$ and $\mathcal{O}\left ( m \right )$ space, respectively, making
	a space requirement of $\mathcal{O} \left ( m+n \right )$.
\end{proof}

\subsection{Experimental Results}
\label{sec:experiment-graph-generate}
In this section, we present the data sets used in the experiments and
the strong scaling and runtime of our parallel algorithm for generating
random graphs.

\textbf{Experimental Setup.}
We use a $32$-core Haswell-EP E$5$-$2698$ v$3$ $2.30$GHz ($3.60$GHz Turbo) dual processor 
node with $128$GB of memory, $1$TB internal hard drive, and QLogic QDR InfiniBand adapter.
We use OpenMP version $3.1$ and GCC version $4.7.2$ for implementation.

\textbf{Data Sets.}  We use degree sequences of five real-world
networks for the experiments.  A summary of the networks is given in 
Table~\ref{table:data-graph-generate}.  Facebook~\cite{nr-sigkdd16} is an
anonymized Facebook friendship network of the students of CMU.
GooglePlus~\cite{nr-sigkdd16} is an online social contact network of GooglePlus.
The USPowerGrid~\cite{abdelhamid2012cinet} network represents a high-voltage
power grid in the western states of the USA. 
Java-CD~\cite{abdelhamid2012cinet} is a Java class dependency network of
JUNG 2.0.1~\cite{jung}.  CM-Collab~\cite{newman2001structure} is a scientific
collaboration network on the condensed matter topic.

\begin{table}[tb!]
	\caption{Data sets used in the experiments, where $n$, $m$, and $\frac{2m}{n}$ 
		denote the no. of vertices, no. of edges, and average degree
		of the networks, respectively.  \texttt{K} denotes thousands.}
	\label{table:data-graph-generate}
	\ra{1.3}
	\centering
	\begin{tabular}{@{}llccc@{}}
		\toprule[1.3pt]
		\multicolumn{1}{@{}l@{}}{\textbf{Network}} &
		\multicolumn{1}{@{}l@{}}{\textbf{~~Type}} &
		\multicolumn{1}{@{}c@{}}{$n$} &
		\multicolumn{1}{@{}c@{}}{$m$} &
		\multicolumn{1}{@{}c@{}}{$\frac{2m}{n}$}
		\\
		\midrule[0.8pt]
		Facebook~\cite{nr-sigkdd16} & Social contact & \texttt{6.6K} & \texttt{250K} & \texttt{75.50} \\
		GooglePlus~\cite{nr-sigkdd16} & Social contact & \texttt{23.6K} & \texttt{39.2K} & \texttt{3.32} \\
		USPowerGrid~\cite{abdelhamid2012cinet} & Power grid & \texttt{4.94K} & \texttt{6.6K} & \texttt{2.67} \\
		Java-CD~\cite{abdelhamid2012cinet} & Dependency & \texttt{6.12K} & \texttt{50.3K} & \texttt{16.43} \\
		CM-Collab~\cite{newman2001structure} & Collaboration & \texttt{16.3K} & \texttt{47.6K} & \texttt{5.85} \\
		\bottomrule[1.3pt]
	\end{tabular}
	\vspace{-0.15in}
\end{table}

\begin{table*}[htb!]
	\caption{A comparison of some structural properties of the random networks
		generated (from the degree sequences of the	real-world networks) by our
		parallel algorithm  with that of the real-world networks and random networks
		generated by swapping 100\% edges of the real-world networks.  
		We use average values of 20 experiments.}
	\label{table:properties}
	\ra{1.3}
	\centering
	\begin{tabular}{@{}llcccccccc@{}}
		\toprule
		& & \multicolumn{8}{c}{\textbf{Network structural properties}} \\
		\cmidrule{3-10}
		&&&&&&& \multicolumn{3}{c}{\textbf{Average vertex value}}\\
		\cmidrule{8-10}
		\textbf{Network}& \textbf{Network} &\textbf{Triangles}&\textbf{Cliques}&\textbf{Connected}& \textbf{Avg. shortest} & \textbf{Diameter} & \textbf{Betweenness} & \textbf{Closeness} & \textbf{Clustering} \\
		&\textbf{model}&&&\textbf{component}& \textbf{path length} && \textbf{centr. (x$10^{-4}$)} & \textbf{centrality} & \textbf{coefficient} \\
		\midrule
		\multirow{3}{*}{Facebook} & Real-world 			    &\texttt{2.31M}&\texttt{1.24M}&\texttt{1}&\texttt{2.74}&\texttt{8}&\texttt{2.63}&\texttt{0.37}&\texttt{0.28} \\
		& Our algo. &\texttt{0.57M}&\texttt{0.40M}&\texttt{1}&\texttt{2.50}&\texttt{6}&\texttt{2.27}&\texttt{0.40}&\texttt{0.04} \\
		& Edge swap &\texttt{0.54M}&\texttt{0.39M}&\texttt{1}&\texttt{2.49}&\texttt{5}&\texttt{2.26}&\texttt{0.41}&\texttt{0.04} \\
		\cmidrule{1-10}
		\multirow{3}{*}{GooglePlus} & Real-world &\texttt{18.22K}&\texttt{31.09K}&\texttt{4}&\texttt{4.03}&\texttt{8}&\texttt{1.28}&\texttt{0.25}&\texttt{0.17} \\
		& Our algo. &\texttt{163.7K}&\texttt{21.96K}&\texttt{1.6K}&\texttt{3.20}&\texttt{5}&\texttt{0.69}&\texttt{0.24}&\texttt{0.22} \\
		& Edge swap &\texttt{99.95K}&\texttt{1.27M}&\texttt{637}&\texttt{3.13}&\texttt{9}&\texttt{0.81}&\texttt{0.29}&\texttt{0.19} \\
		\cmidrule{1-10}
		\multirow{3}{*}{USPowerGrid} & Real-world &\texttt{651}&\texttt{5.69K}&\texttt{1}&\texttt{18.99}&\texttt{46}&\texttt{36.42}&\texttt{0.05}&\texttt{0.08} \\
		& Our algo. &\texttt{8}&\texttt{6.58K}&\texttt{74}&\texttt{8.48}&\texttt{20}&\texttt{14.07}&\texttt{0.11}&\texttt{0.0008} \\
		& Edge swap &\texttt{2}&\texttt{6.59K}&\texttt{88}&\texttt{8.49}&\texttt{22}&\texttt{13.91}&\texttt{0.11}& \texttt{0.0003}\\
		\cmidrule{1-10}
		\multirow{3}{*}{Java-CD} & Real-world &\texttt{0.18M}&\texttt{31.89K}&\texttt{1}&\texttt{2.11}&\texttt{7}&\texttt{1.82}&\texttt{0.48}&\texttt{0.68} \\
		& Our algo. &\texttt{0.29M}&\texttt{21.34K}&\texttt{1}&\texttt{2.10}&\texttt{5}&\texttt{1.80}&\texttt{0.48}&\texttt{0.66} \\
		& Edge swap &\texttt{0.19M}&\texttt{55.79K}&\texttt{1}&\texttt{2.00}&\texttt{4}&\texttt{1.64}&\texttt{0.50}&\texttt{0.69} \\
		\cmidrule{1-10}
		\multirow{3}{*}{CM-Collab} & Real-world &\texttt{68K}&\texttt{10.49K}&\texttt{726}&\texttt{6.63}&\texttt{18}&\texttt{2.51}&\texttt{0.11}&\texttt{0.64} \\
		& Our algo. &\texttt{272}&\texttt{47.09K}&\texttt{25}&\texttt{4.91}&\texttt{12}&\texttt{2.39}&\texttt{0.20}&\texttt{0.0012} \\
		& Edge swap &\texttt{264}&\texttt{47.09K}&\texttt{31}&\texttt{4.91}&\texttt{14}&\texttt{2.39}&\texttt{0.20}&\texttt{0.0010} \\
		\bottomrule
	\end{tabular}
	\vspace{-0.1in}
\end{table*}

\begin{figure}[htb!]
	\begin{center}
		\includegraphics[width=0.475\textwidth]{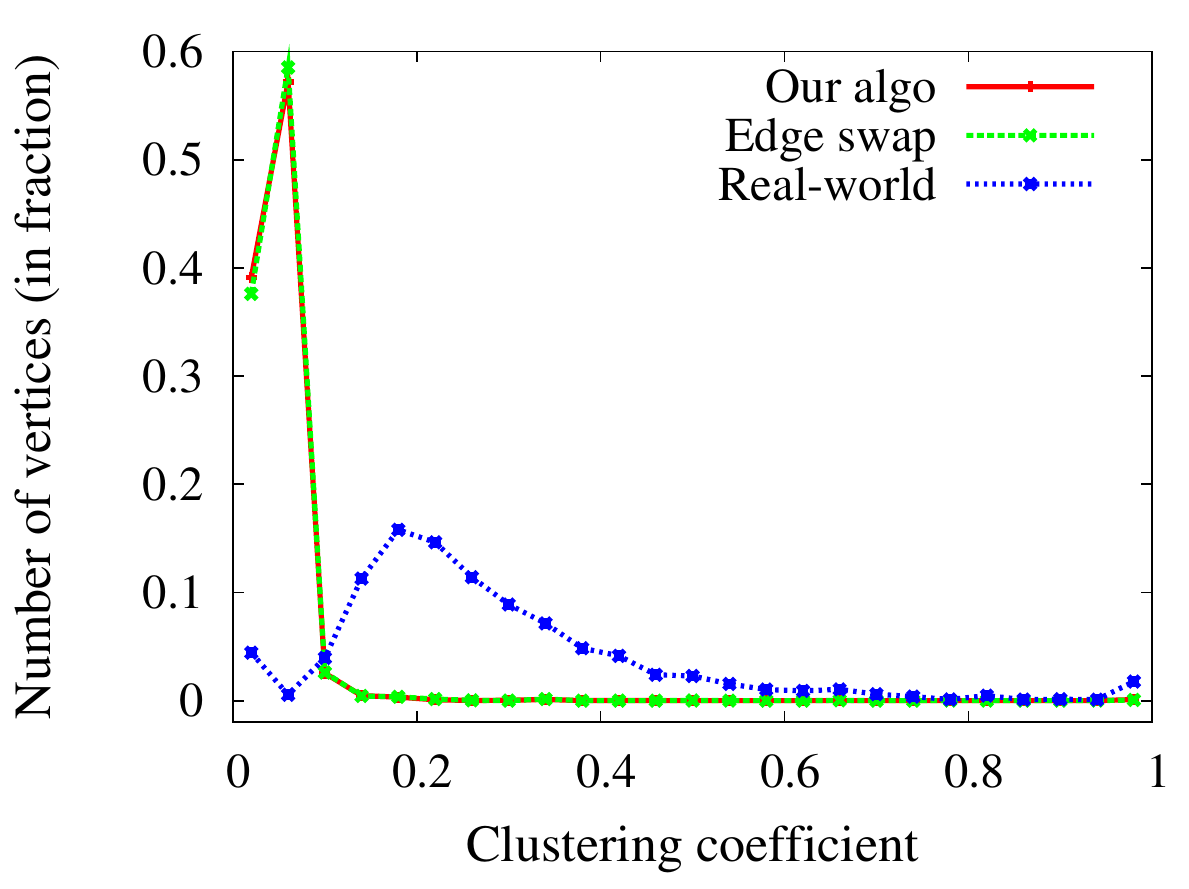}
		\caption{A comparison of clustering coefficient distributions of the real-world
			Facebook network and random networks generated by our parallel algorithm and
			by swapping edges of the real-world network.  The distributions of the random
			networks almost completely overlap with each other.}
		\label{fig:facebook-cc-comparison}
	\end{center}
	\vspace{-0.2in}
\end{figure}

\textbf{Strong Scaling.}  The strong scaling and runtime of the parallel algorithm are
shown in Figs.~\ref{fig:strong_scaling_graph_generate}~and~\ref{fig:runtime_graph_generate},
respectively.  The speedups increase almost linearly with the increase in the
number of cores, and we achieve a maximum speedup of $20.5$ with $32$ cores on
the Facebook network.  

\textbf{Structural Properties of the Generated Graphs.}
A comparison of the structural properties of the random graphs generated by our parallel
algorithm with that of the real-world graphs and random graphs generated by swapping
edges is given in Table~\ref{table:properties}.  To generate random graphs by swapping
edges, we perform $\frac{m}{2} \ln m$ edge swap operations to swap $100\%$ edges of the
real-world graphs~\cite{bhuiyan2017parallel}.  We use average values of $20$ experiments.
We study the number of triangles, cliques, connected components, average shortest path length,
diameter, average betweenness centrality, average closeness centrality, and average local 
clustering coefficient of the networks.  We observe that in many cases the properties of
the random graphs are far away from the real-world graphs.  Moreover, the structural properties
of the random graphs generated by our algorithm and by swapping edges are very close to each
other in most of the cases.  For example, the clustering coefficient distributions, as shown
in Fig.~\ref{fig:facebook-cc-comparison}, of the random graphs generated by these two methods
almost completely overlap with each other, and it is difficult to distinguish them in the
figure, whereas both of them lie far away from the real-world graph.

%
%
\section{Parallel Algorithm for Checking the Erd\H{o}s-Gallai Characterization}
\label{sec:erdos-gallai}
Many variants of the Erd\H{o}s-Gallai characterization have been developed
and proofs have been given (see~\cite{blitzstein2011sequential} 
and~\cite{mahadev1995threshold} for a good discussion).  Such a useful result
has been presented in Theorem $3.4.1$ in~\cite{mahadev1995threshold}, which
defines the \textit{corrected Durfee number} $\mathcal{C}$ of the degree
sequence $\mathbb{D} = (d_1,d_2, \ldots, d_n)$ (sorted in non-increasing order) as
\begin{equation}
	\mathcal{C} = \left | \left \{ j : d_j \ge j-1 \right \} \right |
\end{equation}
and showed that $\mathbb{D}$ is graphical if and only if it satisfies the first
$\mathcal{C}$ inequalities of the Erd\H{o}s-Gallai test.  The corrected Durfee
number $\mathcal{C}$ is often significantly smaller than the number of vertices
$n$.  For example, for the degree sequence \mbox{$\mathbb{D} = (3,2,2,2,1)$,} 
the corrected Durfee number $\mathcal{C}$ is $3$, as shown in Fig.~\ref{fig:durfeeno};
hence, it is sufficient to check only the first three Erd\H{o}s-Gallai inequalities
instead of checking all five inequalities of Eq.~(\ref{eqn:erdos-gallai}).
\begin{figure}[tb!]
	\centerline{\includegraphics[width=0.65\linewidth]{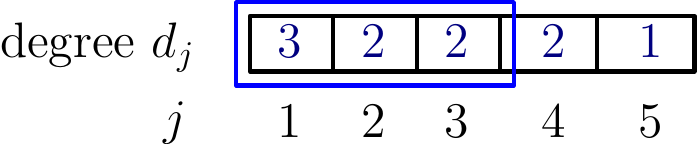}}
	\caption{For the degree sequence $\mathbb{D} = (3,2,2,2,1)$,
		the corrected Durfee number $\mathcal{C} = 3$.}
	\label{fig:durfeeno}
	\vspace{-0.1in}
\end{figure}

The sequential algorithm for checking the Erd\H{o}s-Gallai characterization
is quite straightforward and has a time complexity of
$\mathcal{O}\left (n \right)$~\cite{ivanyi2011erdos}.  A parallel algorithm for
the problem has been presented
in~\cite{dessmark1994parallel}, which has a runtime of~$\mathcal{O}\left(\log
n\right)$ using~$\mathcal{O}\left(\frac{n}{\log n}\right)$ EREW PRAM processors.  
In addition to generating random graphs, the Erd\H{o}s-Gallai characterization
has important applications in many other graph theory problems as well.  For example,
Iv{\'a}nyi
et al.~\cite{ivanyi2011parallel} applied the sequential algorithm for checking the
Erd\H{o}s-Gallai characterization to enumerate the distinct degree sequences of
simple graphs in
\begin{center}
	\begin{minipage}{0.48\textwidth}
		\small{
			\noindent\fbox{%
				\begin{minipage}{\dimexpr0.98\linewidth-2\fboxsep-2\fboxrule\relax}
					\begin{algorithmic}[1]
						\Statex \texttt{$\triangleright$ Compute the corrected Durfee number}
						\State $i \leftarrow 1$
						\State \textbf{while} $i <= n$ \textit{and} $d_i \ge i-1$ \textbf{do}
						\State \hspace{0.1in} $\mathcal{C} \leftarrow i$ \hspace{0.58in}\texttt{// corrected Durfee number}
						\State \hspace{0.1in} $i \leftarrow i + 1$
						
						\vspace{0.1in}			
						\Statex \texttt{$\triangleright$ Compute the prefix sum of the degrees}
						\State $H_0 \leftarrow 0$
						\State \textbf{for} $i = 1$ \textit{to} $n$ \textbf{do}
						\State \hspace{0.1in} $H_i \leftarrow H_{i-1} + d_i$
						
						\vspace{0.1in}						
						\Statex \texttt{$\triangleright$ Check the parity}
						\State \textbf{if} $H_n$ \textit{is} odd \textbf{then}
						\State \hspace{0.1in} \textbf{return} \textit{FALSE} \hspace{0.1in}\texttt{// not a graphical sequence}
						
						\vspace{0.1in}
						\Statex \texttt{$\triangleright$ Compute the weights}
						\State $d_0 \leftarrow n-1$
						\State \textbf{for} $i=1$ \textit{to} $n$ \textbf{do}
						\State \hspace{0.1in} \textbf{if} $d_i < d_{i-1}$ \textbf{then}
						\State \hspace{0.25in} \textbf{for} $j=d_{i-1}$ \textit{downto} $d_i+1$ \textbf{do}
						\State \hspace{0.4in} $w_j \leftarrow i-1$
						\State \hspace{0.25in} $w_{d_i} \leftarrow i$						
						\State \textbf{for} $j=d_n$ \textit{downto} $1$ \textbf{do}
						\State \hspace{0.1in} $w_j \leftarrow n$						
						
						\vspace{0.1in}
						\Statex \texttt{$\triangleright$ Check the Erd}\H{o}\texttt{s-Gallai inequalities}
						\State \textbf{for} $i=1$ \textit{to} $\mathcal{C}$ \textbf{do}
						\State \hspace{0.1in} \textbf{if} $i \le w_{i}$ \textbf{then}
						\State \hspace{0.25in} \textbf{if} $H_i > i(i-1) + i(w_i-i) + H_n - H_{w_i}$ \textbf{then}
						\State \hspace{0.4in} \textbf{return} \textit{FALSE}
						\State \hspace{0.1in} \textbf{else if} $H_i > i(i-1) + H_n - H_i$ \textbf{then}
						\State \hspace{0.25in} \textbf{return} \textit{FALSE}
						
						\vspace{0.1in}
						\State \textbf{return} \textit{TRUE} \hspace{0.58in}\texttt{// a graphical sequence}
					\end{algorithmic}
				\end{minipage}
			}
		}
		\captionof{figure}{A sequential algorithm~\cite{ivanyi2011erdos} for checking the Erd\H{o}s-Gallai characterization.}
		\label{algo:seq-erdos-gallai}
	\end{minipage}
\end{center}
parallel.  In this section, we present a shared-memory parallel algorithm
for checking the Erd\H{o}s-Gallai characterization of a given degree
sequence with a time complexity 
of~$\mathcal{O} \left ( \frac{n}{\mathcal{P}} + \log \mathcal{P} \right )$
using~$\mathcal{P}$ processing cores.  
First we briefly review the current state-of-the-art sequential algorithm.

\subsection{Sequential Algorithm}
The sequential algorithm~\cite{ivanyi2011erdos} is quite simple and
consists of the following steps: (\textit{\textbf{i}})~compute the
corrected Durfee number~$\mathcal{C}$,
(\textit{\textbf{ii}})~compute
the prefix sum of the degrees, (\textit{\textbf{iii}})~check the
parity, i.e., whether the
sum of the degrees is even or odd, (\textit{\textbf{iv}})~compute the weights,
which are useful in computing the right hand side of Eq.~(\ref{eqn:erdos-gallai})
in linear time, and (\textit{\textbf{v}})~check the first~$\mathcal{C}$
Erd\H{o}s-Gallai inequalities.  If the sum of the degrees is even and all
the inequalities are satisfied, then the degree sequence~$\mathbb{D}$ is graphical;
otherwise,~$\mathbb{D}$ is not graphical.  The pseudocode of the sequential algorithm
is given in Fig.~\ref{algo:seq-erdos-gallai}.
\subsection{Parallel Algorithm}
Based on the sequential algorithm presented in Fig.~\ref{algo:seq-erdos-gallai},
we present a parallel algorithm for checking the Erd\H{o}s-Gallai characterization.
Below we describe the methodology to parallelize the steps of
the sequential algorithm.

$\bullet$ \textbf{\textit{Step 1}: Compute the Corrected Durfee Number.}
The corrected Durfee number can be computed in parallel in a round robin
fashion, as shown in Fig.~\ref{algo:par-durfee}.  Each core
$\mathbb{P}_k$ computes its local corrected Durfee number $\mathcal{C}_k$.
Then all the cores synchronize and the maximum value of all 
$\mathcal{C}_k$ is reduced as the corrected Durfee number $\mathcal{C}$.

$\bullet$ \textbf{\textit{Step 2}: Compute the Prefix Sum of the Degrees.}
We use a parallel version~\cite{aluru2012teaching} of computing the prefix sum,
as shown in Fig.~\ref{algo:par-prefix-sum}.  Each core $\mathbb{P}_k$ works on
a chunk of size $\ceil[\big]{\frac{n}{\mathcal{P}}}$ of the 

\begin{center}
	\begin{minipage}{0.48\textwidth}
		\small{
			\noindent\fbox{%
				\begin{minipage}{\dimexpr0.98\linewidth-2\fboxsep-2\fboxrule\relax}
					\begin{algorithmic}[1]
						\State $k \leftarrow$ core id
						\vspace{0.1in}
						\Statex \texttt{$\triangleright$ Each core $\mathbb{P}_k$ executes the following in parallel:}
						\State $i \leftarrow k+1$
						\State $\mathcal{C}_k \leftarrow 0$
						\State \textbf{while} $i <= n$ \textit{and} $d_i \ge i-1$ \textbf{do} \label{PDURF:a}
						\State \hspace{0.1in} $\mathcal{C}_k \leftarrow i$ \hspace{0.1in}\texttt{// local corrected Durfee number}
						\State \hspace{0.1in} $i \leftarrow i + \mathcal{P}$ \label{PDURF:b}
						
						\vspace{0.1in}
						\Statex \texttt{$\triangleright$ Reduce the corrected Durfee number}
						\State $\mathcal{C} \leftarrow \textsc{Reduce-Max}_{k}$ $\mathcal{C}_k$ \label{PDURF:c}
					\end{algorithmic}
				\end{minipage}
			}
		}
		\captionof{figure}{Compute the corrected Durfee number in parallel.}
		\label{algo:par-durfee}
	\end{minipage}
\end{center}
\begin{center}
	\begin{minipage}{0.48\textwidth}
		\small{
			\noindent\fbox{%
				\begin{minipage}{\dimexpr0.98\linewidth-2\fboxsep-2\fboxrule\relax}
					\begin{algorithmic}[1]
						\State $k \leftarrow$ core id
						\vspace{0.1in}
						\Statex \texttt{$\triangleright$ Each core $\mathbb{P}_k$ executes the following in parallel:}
						\vspace{0.01in}\State $x \leftarrow k \ceil[\big]{\frac{n}{\mathcal{P}}} + 1$
						\vspace{0.01in}\State $y \leftarrow \min \left \{(k+1) \ceil[\big]{\frac{n}{\mathcal{P}}}, n \right \}$
						\vspace{0.01in}\State $s_k \leftarrow \sum_{i=x}^{y} d_i$ \label{PEG:a}
						\vspace{0.01in}\State \textbf{In Parallel:} $ S_k \leftarrow \sum_{j=0}^{k-1} s_j$ \label{PEG:b}
						\State $Q \leftarrow S_k$ \hspace{1.04in} \texttt{// note that $S_0 = 0$} \label{PEG:c}
						\State \textbf{for} $i=x$ \textit{to} $y$ \textbf{do} \label{PEG:e}
						\State \hspace{0.1in} $H_i \leftarrow Q + d_i$
						\State \hspace{0.1in} $Q \leftarrow H_i$ \label{PEG:d}
					\end{algorithmic}
				\end{minipage}
			}
		}
		\captionof{figure}{Compute the prefix sum of the degrees in parallel.}
		\label{algo:par-prefix-sum}
	\end{minipage}
\end{center}
degree sequence.  First, the sum $s_k$ of the degrees in the chunk is
computed (Line~\ref{PEG:a}) and then a prefix sum~$S_k$
of the $s_j$ ($0 \le j \le k-1$) is
computed in parallel (Line~\ref{PEG:b}).  Finally, each core gives a pass to the chunk
and uses the value of $S_k$ to compute the final prefix sum (Lines~\ref{PEG:c}-\ref{PEG:d}).

$\bullet$ \textbf{\textit{Step 3}: Check the Parity.}
The master core checks whether the sum of the degrees
is even.  If the sum is odd, then the degree sequence is not graphical.
Otherwise, the algorithm proceeds to the next step.

%
%

%
$\bullet$ \textbf{\textit{Step 4}: Compute the Weights.}
The pseudocode of computing the weights in parallel is shown in Fig.~\ref{algo:par-weight}.
We first initialize (Lines~\ref{PWT:a}-\ref{PWT:b}) the weight array $w$ in parallel.  
Then the actual weights are computed inside a \texttt{for} loop (Lines~\ref{PWT:c}-\ref{PWT:d})
in parallel.  Due to the simultaneous nature of the parallel algorithm, there is a possibility
that the same weight $w_j$ may be updated by multiple cores in an order different than that
of the sequential algorithm.  To deal with this difficulty, we add two additional
\texttt{if} conditions (Lines~\ref{PWT:e} and~\ref{PWT:f}) as the values of $w_j$ are
only updated with larger values in the sequential algorithm.  These two conditions ensure
the correctness of the weight values as well as allow simultaneous parallel computation
of them.  Finally, the larger weights are computed in parallel in the last \texttt{for}
loop (Lines~\ref{PWT:g}-\ref{PWT:h}).

$\bullet$ \textbf{\textit{Step 5}: Check the Erd\H{o}s-Gallai Inequalities.}
The Erd\H{o}s-Gallai inequalities can be checked in parallel in a round
robin fashion.  We have to check only the first $\mathcal{C}$ inequalities
instead of checking all the $n$ inequalities.  This significantly improves
the performance of the algorithm since $\mathcal{C} << n$ in many degree
sequences, as shown later in Table~\ref{table:data-erdos-gallai}.
If any of the inequalities is dissatisfied, then the degree sequence is not graphical;
otherwise, it is a graphical sequence.  The pseudocode of the algorithm is presented
in Fig.~\ref{algo:par-eg-inequality}.

\begin{figure*}[htb!]
	\hfill
	\begin{minipage}[t]{.475\textwidth}
		\begin{center}
			\centerline{\includegraphics[width=1.0\textwidth]{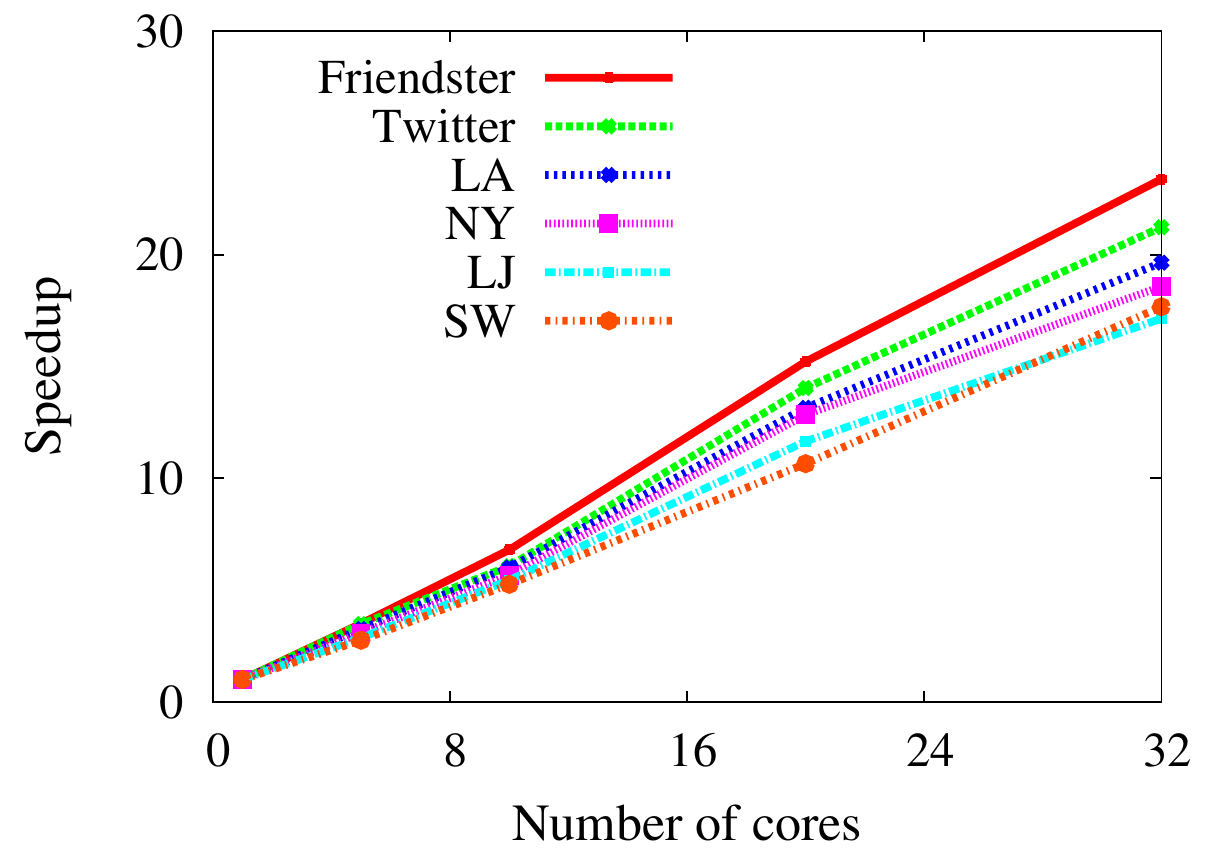}}
			\caption{Strong scaling of the parallel algorithm for checking the
				Erd\H{o}s-Gallai characterization.}
			\label{fig:strong_scaling_erdos_gallai}
		\end{center}
	\end{minipage}
	\hfill
	\begin{minipage}[t]{.475\textwidth}
		\begin{center}
			\centerline{\includegraphics[width=1.0\textwidth]{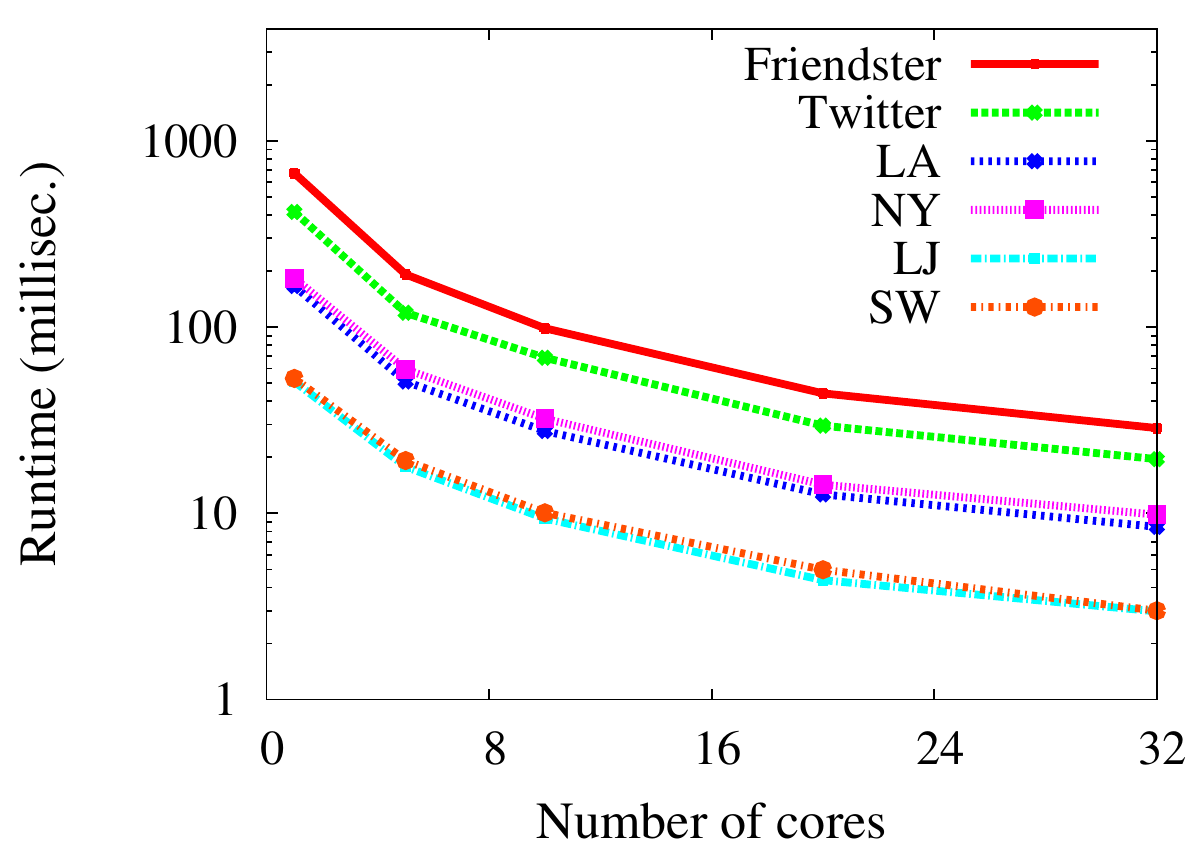}}
			\caption{Runtime of the parallel algorithm for checking the
				Erd\H{o}s-Gallai characterization.}
			\label{fig:runtime_erdos_gallai}
		\end{center}
	\end{minipage}
	\hfill
	\vspace{-0.2in}
\end{figure*}

\begin{theorem}
	The time complexity of each of the core $\mathbb{P}_k$ 
	in the parallel algorithm for checking the Erd\H{o}s-Gallai
	characterization is $\mathcal{O} \left ( \frac{n}{\mathcal{P}} + \log \mathcal{P} \right )$.
\end{theorem}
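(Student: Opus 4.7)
The plan is to bound the work each core $\mathbb{P}_k$ performs in each of the five steps of the parallel algorithm separately, show that every step costs $\mathcal{O}(n/\mathcal{P} + \log \mathcal{P})$ per core, and then sum the five bounds. Since the step count is constant, the sum stays in $\mathcal{O}(n/\mathcal{P} + \log \mathcal{P})$.

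For Step 1 (Fig.~\ref{algo:par-durfee}), I would note that the loop on Lines~\ref{PDURF:a}--\ref{PDURF:b} visits indices $k+1, k+1+\mathcal{P}, k+1+2\mathcal{P}, \ldots$, so each core performs at most $\lceil n/\mathcal{P}\rceil$ iterations of constant-time work, giving $\mathcal{O}(n/\mathcal{P})$. The reduction on Line~\ref{PDURF:c} is a standard tree reduction over $\mathcal{P}$ values, contributing $\mathcal{O}(\log \mathcal{P})$. For Step 2 (Fig.~\ref{algo:par-prefix-sum}), each core owns a chunk of size $\lceil n/\mathcal{P}\rceil$: the local sum on Line~\ref{PEG:a} and the final update pass on Lines~\ref{PEG:e}--\ref{PEG:d} both cost $\mathcal{O}(n/\mathcal{P})$, while the inter-core prefix sum on Line~\ref{PEG:b} is a standard parallel prefix over $\mathcal{P}$ scalars, running in $\mathcal{O}(\log \mathcal{P})$ time. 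Step 3 is a single parity test by the master core in $\mathcal{O}(1)$ time.

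For Step 4 (Fig.~\ref{algo:par-weight}), I would argue that the initialization loop and the subsequent loops are partitioned so that each core touches $\mathcal{O}(n/\mathcal{P})$ entries of $w$ and $d$; the key observation here is that the nested inner loop over $j = d_{i-1}$ down to $d_i+1$, summed over all $i$, telescopes to $\mathcal{O}(n)$ total writes to the weight array (since the $d_i$ are non-increasing and $d_0 = n-1$), which when distributed across $\mathcal{P}$ cores gives $\mathcal{O}(n/\mathcal{P})$ per core. Finally, Step 5 (Fig.~\ref{algo:par-eg-inequality}) performs a round-robin over only $\mathcal{C} \le n$ inequalities with $\mathcal{O}(1)$ work each, so each core does $\mathcal{O}(\mathcal{C}/\mathcal{P}) = \mathcal{O}(n/\mathcal{P})$ work.

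Adding the five bounds yields $\mathcal{O}(n/\mathcal{P} + \log \mathcal{P})$, as claimed. The main obstacle I anticipate is justifying the bound for Step 4: the inner \texttt{for} loop on $j$ has a data-dependent length, so a naive per-$i$ bound would be loose, and I must invoke the telescoping argument on the total number of weight updates to get the correct amortized $\mathcal{O}(n/\mathcal{P})$ per core; I also need to confirm that the two guarding \texttt{if} conditions added to handle concurrent writes cost only $\mathcal{O}(1)$ per attempted update and therefore do not break the amortization.
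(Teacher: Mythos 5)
Your proposal is correct and follows essentially the same step-by-step decomposition as the paper's proof: per-core bounds of $\mathcal{O}(n/\mathcal{P})$ for the round-robin and chunked loops, $\mathcal{O}(\log \mathcal{P})$ for the reduction and parallel prefix, and the observation that the nested weight-update loops perform only $\mathcal{O}(n)$ total writes (your telescoping argument just makes explicit what the paper asserts). The only cosmetic difference is that the paper bounds the Step~1 and Step~5 loops by $\mathcal{O}(\mathcal{C}/\mathcal{P})$ before absorbing $\mathcal{C} \le n$, whereas you bound them by $\mathcal{O}(n/\mathcal{P})$ directly; the final bound is identical.
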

\begin{proof}
	The \texttt{while} loop
	in Lines~\ref{PDURF:a}-\ref{PDURF:b} in Fig.~\ref{algo:par-durfee} takes $\mathcal{O} \left 
	( \frac{\mathcal{C}}{\mathcal{P}} \right )$ time, where $\mathcal{C}$ is the
	corrected Durfee number and $\mathcal{P}$ is the number of cores.  
	The \texttt{reduction} in Line~\ref{PDURF:c} takes $\mathcal{O} \left ( \log \mathcal{P}
	\right )$ time.  Hence, the corrected Durfee
	number in Step $1$ can be computed in $\mathcal{O} \left ( \frac{\mathcal{C}}{\mathcal{P}}
	+ \log \mathcal{P} \right )$ time.  Lines~\ref{PEG:a},~\ref{PEG:b}, and~\ref{PEG:e}-\ref{PEG:d} in 
	Fig.~\ref{algo:par-prefix-sum} take $\mathcal{O} \left ( \frac{n}{\mathcal{P}} \right )$, 
	$\mathcal{O} \left (\log \mathcal{P} \right )$, and $\mathcal{O} \left ( \frac{n}{\mathcal{P}} 
	\right )$ time, respectively, where $n$ is the number of vertices.  So, the prefix sum of the 
	degrees in Step $2$ can be computed in $\mathcal{O} \left ( \frac{n}{\mathcal{P}} +
	\log \mathcal{P} \right )$ time~\cite{aluru2012teaching}.  Checking the parity in Step
	$3$ can be done in $\mathcal{O}(1)$ time.  Each of the three \texttt{for} loops 
	(Lines~\ref{PWT:a}-\ref{PWT:b},~\ref{PWT:c}-\ref{PWT:d}, \mbox{and~\ref{PWT:g}-\ref{PWT:h})} in
	Fig.~\ref{algo:par-weight} takes 
	$\mathcal{O} \left ( \frac{n}{\mathcal{P}} \right )$ time.  Although the two \texttt{for}
	loops in Lines~\ref{PWT:c} and~\ref{PWT:i} are nested, the total number of weights updated
	are~$\mathcal{O} \left(n\right)$.  Thus computing the weights 
	in Step~$4$ takes~$\mathcal{O} \left ( \frac{n}{\mathcal{P}} \right )$ time.
	The \texttt{while} loop (Lines~\ref{PINEQ:a}-\ref{PINEQ:b}) in 
	Fig.~\ref{algo:par-eg-inequality} takes~$\mathcal{O} \left 
	( \frac{\mathcal{C}}{\mathcal{P}} \right )$ time.	 Therefore, the Erd\H{o}s-Gallai 
	inequalities in Step~$5$ are tested in $\mathcal{O} \left 
	( \frac{\mathcal{C}}{\mathcal{P}} \right )$ time.  Thus, the time complexity of the 
	algorithm is $\mathcal{O} \left ( \frac{n}{\mathcal{P}} + \frac{\mathcal{C}}{\mathcal{P}} + 
	\log \mathcal{P} \right ) = \mathcal{O} \left ( \frac{n}{\mathcal{P}} + \log \mathcal{P} \right )$.
\end{proof}

\vspace{0.01in}
\begin{theorem}
	The space complexity of the parallel algorithm for checking the
	Erd\H{o}s-Gallai characterization is $\mathcal{O} \left ( n \right )$.
\end{theorem}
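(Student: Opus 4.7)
The plan is to account for every data structure touched by the five steps of the parallel algorithm and sum their sizes, then note that the per-core local variables contribute only lower-order terms under the standard assumption $\mathcal{P}=O(n)$. The shared structures are: the input degree sequence $\mathbb{D}$ of length $n$; the prefix-sum array $H_0,H_1,\dots,H_n$ of length $n+1$ built in Step~2; and the weight array $w$ used in Steps~4 and~5. For the per-core overhead I would track the local Durfee counters $\mathcal{C}_k$ and the local sums $s_k, S_k$, each of which stores one scalar per core.

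First I would handle the three shared arrays. The degree sequence contributes $\mathcal{O}(n)$ trivially, and $H$ contributes $\mathcal{O}(n)$ since it is indexed by $0,\dots,n$. For $w$ the index is a degree value rather than a vertex index, so the argument is slightly different: since every $d_i$ satisfies $0 \le d_i \le n-1$, only entries $w_1,\dots,w_{n-1}$ are ever written (in Fig.~\ref{algo:par-weight}), so $w$ occupies $\mathcal{O}(n)$ space as well. I would state this explicitly to avoid the mild subtlety that $w$ is keyed by degree magnitudes.

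Next I would handle the per-core scratch state. Each core $\mathbb{P}_k$ keeps a constant number of local scalars ($k$, chunk endpoints $x,y$, the running quantity $Q$, and the contributed $\mathcal{C}_k,s_k,S_k$), giving $\mathcal{O}(\mathcal{P})$ total. Since the algorithm is analyzed under $\mathcal{P} \le n$ (otherwise extra cores are idle), $\mathcal{O}(\mathcal{P}) = \mathcal{O}(n)$, and the reductions in Lines~\ref{PDURF:c} and~\ref{PEG:b} can be carried out in place within this same $\mathcal{O}(\mathcal{P})$ footprint. Summing everything gives $\mathcal{O}(n)+\mathcal{O}(n)+\mathcal{O}(n)+\mathcal{O}(\mathcal{P}) = \mathcal{O}(n)$, establishing the claim.

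There is essentially no hard step here; the only point that requires a line of justification rather than inspection is the bound on $|w|$, because $w$ is indexed by the range of the degrees instead of by vertex indices. Once one observes that $d_i < n$ for every $i$, the bound is immediate. I would therefore write the proof as a short, itemized accounting of the three shared arrays and the per-core state, closing with the $\mathcal{P} \le n$ remark.
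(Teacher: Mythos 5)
Your proof is correct and takes the same approach as the paper, namely a direct accounting of the storage used; the paper's own proof is a single sentence noting that the degree sequence and the prefix-sum array each take $\mathcal{O}(n)$ space. Your version is in fact more complete, since you also bound the weight array $w$ (correctly observing that its indices are degree values, all at most $n-1$) and the $\mathcal{O}(\mathcal{P})$ per-core scratch state, neither of which the paper mentions.
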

\begin{center}
	\begin{minipage}{0.48\textwidth}
		\small{
			\noindent\fbox{%
				\begin{minipage}{\dimexpr0.98\linewidth-2\fboxsep-2\fboxrule\relax}
					\begin{algorithmic}[1]
						\State $d_0 \leftarrow n-1$
						\vspace{0.1in}
						\Statex \texttt{$\triangleright$ Initialize the weight array}
						\State \textbf{for} $i=1$ \textit{to} $n$ \textbf{\textit{in parallel} do} \label{PWT:a}
						\State \hspace{0.1in} $w_i \leftarrow 0$ \label{PWT:b}
						\vspace{0.1in}
						\Statex \texttt{$\triangleright$ Compute the weight values}
						\State \textbf{for} $i=1$ \textit{to} $n$ \textbf{\textit{in parallel} do} \label{PWT:c}
						\State \hspace{0.1in} \textbf{if} $d_i < d_{i-1}$ \textbf{then}
						\State \hspace{0.25in} \textbf{for} $j=d_{i-1}$ \textit{downto} $d_i+1$ \textbf{\textit{in parallel} do} \label{PWT:i}
						\State \hspace{0.4in} \textbf{if} $i-1 > w_j$ \textbf{then} \label{PWT:e}
						\State \hspace{0.55in} $w_j \leftarrow i-1$						
						\State \hspace{0.25in} \textbf{if} $i > w_{d_i}$ \textbf{then} \label{PWT:f}
						\State \hspace{0.4in} $w_{d_i} \leftarrow i$ \label{PWT:d}
						\vspace{0.1in}
						\Statex \texttt{$\triangleright$ Compute the larger weight values}
						\State \textbf{for} $j=d_n$ \textit{downto} $1$ \textbf{\textit{in parallel} do} \label{PWT:g}
						\State \hspace{0.1in} $w_j \leftarrow n$ \label{PWT:h}
					\end{algorithmic}
				\end{minipage}
			}
		}
		\captionof{figure}{Compute the weights in parallel.}
		\label{algo:par-weight}
	\end{minipage}
\end{center}
\begin{center}
	\begin{minipage}{0.48\textwidth}
		\small{
			\noindent\fbox{%
				\begin{minipage}{\dimexpr0.98\linewidth-2\fboxsep-2\fboxrule\relax}
					\begin{algorithmic}[1]						
						\State $k \leftarrow$ core id
						\State $flag \leftarrow TRUE$ \hspace{0.6in} \texttt{// shared variable}
						\vspace{0.1in}
						\Statex \texttt{$\triangleright$ Each core $\mathbb{P}_k$ executes the following in parallel:}
						\State $i \leftarrow k+1$
						\State \textbf{while} $i <= \mathcal{C}$ \textit{and} $flag = TRUE$ \textbf{do} \label{PINEQ:a}
						\State \hspace{0.1in} \textbf{if} $i \le w_{i}$ \textbf{then}
						\State \hspace{0.25in} \textbf{if} $H_i > i(i-1) + i(w_i-i) + H_n - H_{w_i}$ \textbf{then}
						\State \hspace{0.4in} $flag \leftarrow FALSE$
						\State \hspace{0.1in} \textbf{else if} $H_i > i(i-1) + H_n - H_i$ \textbf{then}
						\State \hspace{0.25in} $flag \leftarrow FALSE$
						\State \hspace{0.1in} $i \leftarrow i + \mathcal{P}$ \label{PINEQ:b}
						\vspace{0.1in}
						\State \textsc{\textbf{Omp-Barrier}}
						\State \textbf{return} $flag$
					\end{algorithmic}
				\end{minipage}
			}
		}
		\captionof{figure}{Check the Erd\H{o}s-Gallai inequalities in parallel.}
		\label{algo:par-eg-inequality}
	\end{minipage}
\end{center}	
\begin{proof}
	Storing the degree sequence	and the prefix sum of the degrees take $\mathcal{O}
	\left ( n \right )$ space.
\end{proof}

\begin{table}[b!]
	\caption{Data sets used in the experiments, where $n$, $m$, $\frac{2m}{n}$, and
		$\mathcal{C}$ denote the number of vertices, number of edges, average degree,
		and the corrected Durfee number of the networks, respectively.  
		\texttt{M} and \texttt{B} denote millions and billions, respectively.}
	\label{table:data-erdos-gallai}
	\ra{1.3}
	\centering
	\begin{tabular}{@{}llcccc@{}}
		\toprule[1.3pt]
		\multicolumn{1}{@{}l@{}}{\textbf{Network}} &
		\multicolumn{1}{@{}l@{}}{\textbf{~~Type}} &
		\multicolumn{1}{@{}c@{}}{$n$} &
		\multicolumn{1}{@{}c@{}}{$m$} &
		\multicolumn{1}{@{}c@{}}{$\frac{2m}{n}$} &
		\multicolumn{1}{@{}c@{}}{$\mathcal{C}$}
		\\
		\midrule[0.8pt]
		Friendster~\cite{SNAP} & Social & \texttt{65.6M} & \texttt{1.8B} & \texttt{55.06} & \texttt{2959} \\
		Twitter~\cite{SNAP} & Social & \texttt{40.56M} & \texttt{667.7M} & \texttt{32.93} & \texttt{6842} \\
		Los Angeles (LA)~\cite{socialContact} & Contact & \texttt{16.23M} & \texttt{459.3M} & \texttt{56.59} & \texttt{380} \\
		New York (NY)~\cite{socialContact} & Contact & \texttt{17.88M} & \texttt{480.1M} & \texttt{53.70} & \texttt{387} \\
		LiveJournal (LJ)~\cite{SNAP} & Social & \texttt{4.80M} & \texttt{42.85M} & \texttt{17.68} & \texttt{990} \\
		SmallWorld (SW)~\cite{smallWorld} & Random & \texttt{4.80M} & \texttt{48.00M} & \texttt{20.00} & \texttt{31} \\
		\bottomrule[1.3pt]
	\end{tabular}
\end{table}

\subsection{Performance Evaluation}
In this section, we present the data sets used in the experiments and
the strong scaling and runtime of our parallel algorithm for checking
the Erd\H{o}s-Gallai characterization.  We use the same experimental
setup as described before in Section~\ref{sec:experiment-graph-generate}.

\textbf{Data Sets.}  We use degree sequences of both artificial and real-world
networks for the experiments.  A summary of the networks is given in 
Table~\ref{table:data-erdos-gallai}.  
Friendster, Twitter, and LiveJournal~(LJ) are real-world online social
networks~\cite{SNAP}.  New York~(NY) and Los Angeles~(LA) are synthetic, yet
realistic social contact networks~\cite{socialContact}.  The SmallWorld
random network follows the Watts-Strogatz small world network model~\cite{smallWorld}.
Table~\ref{table:data-erdos-gallai} also shows that the corrected Durfee number 
$\mathcal{C}$ is significantly smaller than the number of vertices $n$ for all
six networks.

\textbf{Strong Scaling.}
The strong scaling and runtime of the parallel algorithm are illustrated
in Figs.~\ref{fig:strong_scaling_erdos_gallai}~and~\ref{fig:runtime_erdos_gallai},
respectively.  The speedup increases almost linearly with the increase in the
number of cores.  We observe better speedups for the degree sequences
of larger graphs and achieve a maximum speedup of 23 with 32
cores on the Friendster graph.

\section{Conclusion}
\label{sec:conclusion}
We presented an efficient 
parallel algorithm for generating random graphs
with prescribed degree sequences.  It can be used in studying various 
structural properties of and dynamics over a network, sampling graphs 
uniformly at random from the graphical realizations of a given degree
sequence and estimating the number of possible graphs with a given degree
sequence.  The algorithm never gets stuck, can generate every possible graph
with a positive probability, and exhibits good speedup.  We also compared
several important structural properties of the random graphs generated by
our parallel algorithm with that of the real-world graphs and random graphs
generated by the edge swapping method.  In addition, we developed an efficient
parallel algorithm for checking the  Erd\H{o}s-Gallai characterization
of a given degree sequence.  This algorithm can be of independent interest and
prove useful in parallelizing many other graph theory problems.  We believe the
parallel algorithms will contribute significantly in analyzing and mining
emerging complex systems and understanding interesting characteristics of
such networks.

\bibliographystyle{IEEEtran}
\bibliography{IEEEabrv,references}

\end{document}